\documentclass[%
 reprint,
 amsmath,amssymb,
 onecolumn,
 showkeys,
 aps,
pra,
]{revtex4-2}

\usepackage{bm}
\usepackage{graphicx}
\usepackage{dcolumn}
\usepackage{bm}
\usepackage{longtable}
\usepackage{soul}
\setstcolor{red}
\usepackage{algorithm}
\usepackage{algpseudocode}
\usepackage[caption=false]{subfig}
\usepackage{tikz}
\usepackage{amsthm}
\newtheorem{theorem}{Theorem}

\newtheorem{example}{Example}
\newcommand{\rblock}[1]{%
  \tikz[baseline=(X.center)]{\node[draw, fill=gray!20, minimum width=1.6cm, minimum height=1.6cm] (X) {$#1$};}%
}

\usepackage{physics}
\usepackage{xcolor,ulem}

\def\code#1{\texttt{#1}}

\begin{document}

\title{A Universal Entanglement Witness Generator}
\author{Aiden R. Rosebush}
\email{aiden.rosebush@mail.utoronto.ca}
\affiliation{
Dept of Electrical \& Computer Engineering, University of Toronto, Toronto, Ontario, Canada M5S 3G4
}

\author{Alexander C. B. Greenwood}
\affiliation{
Dept of Electrical \& Computer Engineering, University of Toronto, Toronto, Ontario, Canada M5S 3G4
}

\author{Andi Shahaj}
\affiliation{
Dept of Electrical \& Computer Engineering, University of Toronto, Toronto, Ontario, Canada M5S 3G4
}

\author{Li Qian}
\affiliation{
Dept of Electrical \& Computer Engineering, University of Toronto, Toronto, Ontario, Canada M5S 3G4
}

\date{\today}

\begin{abstract}
Entanglement witnesses are essential for certifying entanglement, yet constructing ones that are both noise-robust and economical in measurement settings remains challenging—particularly beyond qubits and for non-stabilizer (``magic'') states. We present a machine-learning method that, given a target state and a user-specified number of measurement settings, generates an entanglement witness optimized for noise tolerance in the neighborhood of that state, requiring only local measurements. The approach is fully general, applying to multipartite qubit and qudit systems alike, including non-stabilizer states. For \(N\) qudits of dimension \(d\), we train on the fully-separable eigenstates of each qudit's \(SU(d)\) generators to find a prototype witness, then tune the witness's bias term via gradient descent to maximize noise tolerance. Adversarial training further strengthens the witnesses, delivering greater noise tolerance with even fewer settings; critically, under this scheme the required training-set size becomes independent of system size. We package the entire pipeline as an automated script that, in every case we tested, produces witnesses surpassing all existing methods in noise tolerance and/or number of measurement settings. We demonstrate the method on Bell, GHZ, W, and hypergraph states, along with a range of qudit states, spanning 2–6 qubits, bipartite qudits up to 
d=10, and tripartite qutrits. Our witnesses achieve perfect accuracy across both physical experimental test states and large numerical sets of separable mixed states—including 30 million test states for a 3-qubit W-state witness and 10 million for a 4-qubit hypergraph-state witness—and we experimentally confirm the noise tolerance of Bell- and hypergraph state witnesses on both photonic and superconducting platforms, respectively.
\end{abstract}

\keywords{Entanglement witnesses, entanglement detection, differential programming, machine learning}
\maketitle


\section{\label{sec: Intro} Introduction}


Quantum entanglement is a defining feature of physical systems utilizing quantum technologies. Entanglement is exploited in emerging communication \cite{PhysRevLett.98.060503}, \cite{zhong}, imaging \cite{chen}, and information processing \cite{lukens} technologies. These technologies become more powerful with higher dimensional quantum systems \cite{PhysRevLett.98.060503, zhong, chen, lukens}, such as groups of many quantum 2-level systems (qubits) or systems of many levels (qudits). Advancements in quantum hardware now permit the control of states within increasingly larger dimensions \cite{thomas,kues,imany}. Consequently, efficient state characterization—particularly entanglement detection—is vital for the advancement of high-dimensional quantum technologies.

Entanglement can, in principle, be certified by evaluating an appropriate entanglement measure. While this approach is well established for bipartite systems \cite{PhysRevLett.80.2245,RevModPhys.74.197}, the multipartite setting is more subtle, owing to the absence of a universally accepted notion of maximal entanglement and the diversity of inequivalent classes of multipartite entanglement \cite{plenio2005introduction, acin2001classification}. Although several computable multipartite criteria exist \cite{taming, wei2003geometric,xie2021triangle, coffman2000distributed}, their experimental use often relies on quantum state tomography, which becomes experimentally and computationally complex for appreciably large systems. For a d-level system of N-particles, the number of measurement settings required for a so-called ``tomographically complete" set of measurements grows as $O(d^{2N})$.

A complete characterization is often unnecessary for entanglement detection, though existing alternatives to QST still face significant hurdles. Recent approaches using neural networks \cite{ma} and convex hull approximations \cite{lu} aim to detect entanglement with fewer measurements, yet their efficiency is highly dependent on the system's specifics. For instance, the neural network approach requires  \(O(3^N)\) measurements for $N$-qubit systems and only demonstrates a meaningful reduction when the state is already known to be of the GHZ-type \cite{ma}. Similarly, the convex hull method requires over \(10^3\) binary classifications, for a simple two-qubit case, offering little practical improvement over standard QST. Even methods employing support vector machines \cite{nonlinear_kernel}, with nonlinear kernels, while achieving high accuracy, still rely on a tomographically complete basis, failing to reduce the overall measurement burden.

A more modern deep-learning method for entanglement quantification reaches only five qubits, attains imperfect accuracy, and requires hundreds of measurements \cite{science_advances_DNNs}. A neural-network approach scales considerably further, predicting entanglement metrics such as the Rényi entropy for up to 100 qubits from only local Pauli measurements \cite{100_qubit_ising}. However, it is restricted to the ground and dynamical states of a fixed family of local Hamiltonians—an exponentially small subset of the full state space \cite{poulin_illusion}—and, as a regression-based method, it cannot achieve exact accuracy even in the limit of infinitely many measurements. 

Another approach to entanglement detection, particularly useful for benchmarking quantum technologies \cite{benchmark}, is to construct what is called an entanglement witness. A witness $\mathcal{W}$ is an operator with an expectation value $\langle \mathcal{W}\rangle$ which is non-negative for all separable states and negative only for states that are in proximity to the specific entangled states.  Entanglement witnesses based on local measurements only require $O((d)^{N})$ measurements, which offers superior scaling than all of the other entanglement detection methods mentioned above \cite{PhysRevApplied.19.034058}. Moreover, in practice, witnesses often need far fewer measurement settings, depending on the target state they are set to detect \cite{guhne}.

Witnesses can be compared in terms of the number of measurement settings required and their noise tolerance. In the context of this work, noise tolerance is defined as the maximum amount of white noise the witness can tolerate in a Werner state that is related to the target state \(\rho_e\) through \eqref{eq: noise} while correctly classifying the state as entangled. In general, a better witness should have a higher noise tolerance with fewer measurements, but there is often a trade-off between the two metrics. We can quantify the noise tolerance by the largest value of \(p \in [0,1]\) for which the witness has a negative expectation value when detecting the state given by 

\begin{equation}
\label{eq: noise} 
\rho = (1-p)\rho_e + p\frac{\mathbb{I}}{2^N}
\end{equation}.

We define a measurement setting as a set of measurement operators applied to all qubits in a given system. To evaluate a state using an entanglement witness, the expectation value for each measurement setting must be found using many copies of the state under test. We choose not to define measurement settings by counting the total number of measurements done on all qubits in order to reflect how many copies of a given state are necessary to evaluate it.

While entanglement witnesses using methods such as the stabilizer formalism \cite{toth} and fidelity method \cite{guhne} have been developed, the number of measurements required can still be substantial, depending on the specific context \cite{guhne}. In general, analytical methods like these have rarely been shown to produce witnesses with optimal noise tolerances, and in some cases, have not been shown to require the fewest possible measurements \cite{guhne}. Additionally, while the stabilizer formalism and fidelity method can be applied to many classes of states, it is not straightforward to do so. Often, articles are published on how these methods should be used, justifying dedicated research for detection of particular classes of states like hypergraph states \cite{Ghio_2018} and cluster states \cite{cluster}. 

A complementary approach employs \textit{fully decomposable} witnesses obtained through semidefinite programming \cite{taming}. Given a prescribed set of observables, this method finds an optimal fully decomposable witness within the linear span of these observables.  Importantly, the optimization is not performed directly over arbitrary local measurement settings. In practice, one must first choose a suitable restricted set of observables and subsequently determine how the resulting witness can be evaluated using a feasible and economical collection of compatible local measurement settings. Since several Pauli observables may be obtained from a single local basis measurement, finding an effective decomposition and grouping of the witness terms remains a nontrivial experimental-design problem.

Furthermore, fully decomposable witnesses certify genuine multipartite entanglement only for states that are not so-called ``PPT mixtures" \cite{taming}. Although semidefinite programming can optimize witness coefficients within a specified observable span, it neither guarantees detection of all genuinely multipartite-entangled states nor automatically identifies the most experimentally efficient set of measurement settings.

Choosing an analytical means of constructing a witnesses also forces experimentalists to choose between some witnesses of many measurements, offering high noise tolerances, or witnesses requiring few measurements and with low noise tolerances. Depending on the hardware used to produce the states being detected, using many measurements to detect entanglement might or might not be advantageous. No method exists to find a witness requiring some number of measurements in the wide range between these two values, which could offer a more optimal tradeoff between the expense of measurement and the robustness to noise. 

Construction of entanglement witnesses knowing only the density matrix of a target state also poses problems. Using the stabilizer formalism for mixed states, witnesses can only be constructed if a decomposition into a convex combination of pure states exists, such that those pure states share a set of stabilizing operators \cite{toth}. If no such decomposition exists, a fidelity method witness must be used, which in general tend to require more measurement settings for larger systems \cite{guhne}. The stabilizer formalism can also be applied to qudit states, again requiring knowledge of the class of state and some algebraic properties \cite{szczepaniak2025entanglement}. Overall, analytical methods of deriving entanglement witnesses are non-optimal and inaccessible, while alternative methods could stand to offer far more convenience to experimentalists.

Machine learning (ML) has been used to automate the generation of entanglement witnesses in an effort to address these issues. Support vector machines, or SVMs, are a ML technique commonly used for binary classification of labeled data which have been used for this purpose \cite{PhysRevApplied.19.034058, new_4qubit_ref, rosebush}. These approaches have yielded witnesses with larger noise tolerances than some existing witnesses \cite{PhysRevApplied.19.034058, rosebush} but offer no guarantees on the maximum number of measurement settings, as a user cannot specify how many settings will ultimately be required. These methods offer a convenient means of generating witnesses for specific systems but suffer from the drawback that the user cannot specify how many measurement settings they wish to use for a witness. As a result, they may not compete favourably with analytical methods in terms of fewer measurements or higher noise tolerances. Furthermore these methods require a new computer program to be written for each system a witness might exist in, adding to the difficulty of applying them to arbitrary cases. Additionally, these methods produce numerical, approximate witnesses, which do not necessarily guarantee each witness found will correctly classify all separable states. Our method will find a valid witness whenever one can be found.

In this work we introduce a general, complete framework for constructing witnesses with ML. We no longer use SVMs, as in \cite{PhysRevApplied.19.034058, new_4qubit_ref, rosebush}. Instead, we construct general measurement settings and train them directly, along with coefficients and a bias term. With this method, the user can specify any number of measurement settings with which they wish to construct a witness. We allow for a ML model to be algorithmically generated according to the system to which a target state belongs, and introduce a new differential programming framework which ensures perfect accuracy. Our method can be used for any system of qubits or qudits, subject to available computing resources.

Our approach essentially removes the need for a human to understand the target state they wish to detect, while also providing better results than any analytical method in almost every case. With this method we intend to make entanglement detection an entirely automated process, where any entangled state need only be dealt with indiscriminately as a digital variable. In section \ref{sec: Numerical Verfication} we give examples of how our script can be integrated with additional software to allow for witnesses to be constructed and deployed with zero knowledge.

Our method also overcomes another drawback of existing ML methods, namely, the size of training data required. The least amount of training data for training a witness of an \(N\) qubit system is \(6^N\) separable states \cite{rosebush}, creating memory issues for larger systems. In this work we introduce a second new method which maintains the benefits of universality, choice of the number of measurements, and accuracy, while also employing an adversarial training scheme in which the size of training data no longer depends on the size of the system. This method also produces witnesses requiring even fewer measurements and with higher noise tolerances than any other.

The rest of this work is organized as follows. In section \ref{sec: Overall Algorithm} we introduce the procedure to generate witnesses. In section \ref{sec: Training Measurements} we define general measurement settings mathematically and describe how we train and regularize them. In section \ref{sec: DMSO} we describe our new differential programming scheme which ensures all data is classified correctly. In section \ref{sec: Adverse} we describe our adversarial training method. In section \ref{sec: Examples} we provide some sample results, showing how our methods lead to witnesses with fewer measurements and/or higher noise tolerance in a variety of cases. 

Finally, in \ref{sec: Physical Verification} we describe physical verification of some of our witnesses, photonic qubits in our lab or superconducting qubits on IBMQ hardware, to complement numerical verification of our witnesses in \ref{sec: Numerical Verfication}.

\section{\label{sec: Overall Algorithm} Overall Algorithm}

Our method generates a prototype entanglement witness using machine learning and subsequently refines it via a differential program to correct bias. This process, summarized in Algorithm \ref{alg: main}, comprises three primary stages: basis selection, training, and optimization. Detailed implementations of these steps are provided in Sections \ref{sec: Training Measurements} and \ref{sec: DMSO}.

\textbf{A. Basis Selection and Training Data:}
The algorithm first constructs a basis for Hermitian measurements using the generators of the Lie group $SU(d)$ for $d$-level qudits. These Lambda operators \cite{thew} (which reduce to Pauli operators for $d=2$) allow us to sidestep the existence constraints of mutually unbiased bases—a known limitation in previous ML-derived witnesses \cite{PhysRevApplied.19.034058}.

To generate training data, we use the eigenstates of generalized Pauli operators as the separable set and a single copy of the target state as the entangled set. This approach scales more efficiently than random sampling or other contemporary methods \cite{rosebush}.

\textbf{B. Prototype Witness Generation and Refinement:}
We train a machine learning model to produce a prototype witness, $\mathcal{W}_0$, based on a user-defined number of measurements $M$ ($M \geq 2$ \cite{thew}) and a learning rate $lr$. Training continues until stopping criteria are met. These criteria include a set number of iterations past which the variance in the loss is monitored until it falls below an empirical threshold, ensuring convergence (see Section \ref{subsec: Non-Adversarial Training Algorithm}). While $\mathcal{W}_0$ correctly classifies the training data, it may not possess optimal noise tolerance for the target state.

To resolve this, $\mathcal{W}_0$ is passed to a differential program, \texttt{optimizeMixedSep}. This program identifies the most misclassified separable state $\rho_s$—the state yielding the most negative expectation value under $\mathcal{W}_0$. We then compute the final witness $\mathcal{W}_1$ by shifting the bias term such that $\rho_s$ has an expectation value of exactly zero, ensuring the witness is physically robust.

\begin{algorithm}[H]
\caption{Overall Algorithm}\label{alg: main}
\textbf{Problem Input: }
\begin{itemize}
    \item $N, d, M \in  \mathbb{Z}$
\end{itemize}
\textbf{Output: } 
\begin{itemize}
    \item Entanglement witness $W_1$.
\end{itemize}
\begin{algorithmic}

\Require $N \geq 2$, $d \geq 0$, $M \geq 2$
\begin{enumerate}
    \item Generate separable training data. We use the eigenstates of generalized Pauli operators.
    \item Train prototype witness \(\mathcal{W}_0\). The training algorithm is described in section \ref{sec: Training Measurements}.
    \item Optimize to find most misclassified separable state \(\rho_s\) for the prototype witness. This optimization is described in section \ref{sec: DMSO}.
    \item Use \(\rho_s\) to produce corrected witness \(\mathcal{W}_1\) as $\mathcal{W}_1 \gets \mathcal{W}_0 - Tr(\rho_s \mathcal{W}_0) \mathbb{I}$ where $\mathbb{I}$ is the identity operator for the system.
\end{enumerate}
\end{algorithmic}
\end{algorithm}




\section{\label{sec: Training Measurements} Training Measurements}

In this section, we first introduce a witness model by an ansatz of local measurements, and then train the model using an non-adversarial algorithm to produce the prototype witness. 


\subsection{\label{subsec: Witness Construction} Witness Construction}
In general, quantum observables for discrete systems take the form of Hermitian matrices. We randomly initialize and train a Hermitian matrix for each qubit/qudit and each measurement setting. We denote \(M\) as the number of measurement settings, which can be set arbitrarily, for a system with \(N\) qudits of dimension \(d\). We directly train Hermitian matrices and allow them to take on any general form, instead of parameterizing in terms of any specific basis of measurements. We represent these Hermitian matrices in a single tensor, \(H \in \mathbb{C}^{M \times N \times d \times d}\). 

With each measurement setting, we can infer the value of exactly \(2^N - 2\) additional measurements on subsets of the qudits in our system \cite{thew}. For example, if our model provides one measurement setting for three qubits as \(X \otimes Y \otimes Z\), where \(X, Y\) and \(Z\) represent the Pauli matrices, we can also compute the expectation value for \(X \otimes Y \otimes I\), \(X \otimes I \otimes Z\), \(I \otimes Y \otimes Z\), \(X \otimes I \otimes I\), \(I \otimes Y \otimes I\), \(I \otimes I \otimes Z\) without additional measurements. Here \(I\) represents the identity operator on each qudit, indicating that no measurement is taken. We store these additional measurements in a tensor \(J\), of dimension \(M(2^N -2) \times N \times d \times d\).

We define our witness as a linear combination of observables $H_i$ and $J_i$, with coefficients \(c_i\) for each observable: 

\begin{equation}
    \label{eq: witness summations}
    \mathcal{W}_0 = c_0\mathbb{I} + \sum_{i=1}^M c_i \bigotimes_j H_{ij} + \sum_{i = M+1}^{M(2^N -1)} c_i \bigotimes_j J_{(i-M), j}
\end{equation}.

Note we separated the bias term, i.e., the identity operator, in \eqref{eq: witness summations}. The optimization of coefficient $c_0$ will be discussed in Section \ref{sec: DMSO}. Overall \(c\) has \(1 + M \times (2^N - 1)\) elements.

\subsection{\label{subsec: Loss Function} Loss Function}

 With each iteration, we construct a new witness and evaluate it with a loss function which is similar to the hinge loss typically used for support vector machines \cite{svm_textbook}. We compute the expectation value of our prototype witness in each training step with all of our training data, which includes all eigenstates of the generalized Pauli operators and the target entangled state. We label these predictions as \(y_{pi}\) for each training state \(\rho_{i}\). Since \(\mathcal{W}_0\) is normalized, we know \(-1 \leq y_{pi} \leq 1\). The ground truth label \(y_{ti}\) of each separable state is set to \(+1\), and the ground truth label of the sole entangled target state is set to \(-1\). 

 We compute the loss using 
 \begin{equation}
 \label{eq: loss function}
     l = \max_{i} \{\max(1-y_{ti}y_{pi}, 0)\}.
 \end{equation}.

We consider all training data at once, or full batch training. We take the maximum across the full batch instead of the average, as is common \cite{svm_textbook}, as the model will not classify all states correctly. 


In the next section we summarize the algorithm we use to train the prototype witness \(\mathcal{W}_0\).

\subsection{\label{subsec: Non-Adversarial Training Algorithm} Non-Adversarial Training Algorithm}


As described in section \ref{sec: Overall Algorithm} we use stopping criteria to determine when to cease training. Let \(T\) be the loss variability threshold, defined as the variance of the loss over some number of training steps, called the patience, or \(P\). To generate witnesses with higher noise tolerance, a smaller threshold of variance and larger patience should be used. The variance is only computed after the first \(P\) steps to allow the model time to train. We use a threshold of \(10^{-4}\) and a patience of \(2000\). We chose these values empirically to ensure reasonably good witnesses are produced and that the training converges quickly. With a much smaller threshold, the training continues for a long time and performance does not substantially improve. We increment variable \code{i} to encode the number of training steps.

Let \(l\) be the value of loss function described above in section \ref{subsec: Loss Function}. Also, let \(lr\) represent the learning rate, or a small constant value we use to adjust each trainable variable based on its gradient with respect to the loss.

\begin{algorithm}[H]
\caption{Non Adversarial Training Algorithm}\label{alg: W0 training}

\textbf{Problem Input:}
\begin{itemize}
    \item Entangled target state
    \item Separable training data; pure fully separable eigenstates
\end{itemize}
\textbf{Parameters:} 
\begin{itemize}
    \item User specified number of measurements $M$
    \item User specified number of qubits $N$
    \item User specified dimension $d$
    \item Loss variance threshold \(T\)
    \item Patience \(P\)
\end{itemize}
\textbf{Output:}
Prototype witness $\mathcal{W}_{0}$ for the target state stored in \code{EntRhos}

\vspace{5mm}

\begin{enumerate}
\item{Initialize trainable TensorFlow variables.}
\item{\begin{algorithmic}
\item{\While{Loss variance $\geq$ \(T\) or \code{i} $\leq$ \(P\)}
\begin{enumerate}
\item{Assemble Hermitian measurement matrices \(H\) from trainable variables.}
\item{Assemble inferred measurements \(J\) from measurement matrices.}
\item{Assemble witness \(\mathcal{W}_0\) from all measurement matrices and trainable coefficients.}
\item{Add bias term to witness, create prototype witness \(\mathcal{W}_0\).}
\item{Compute expectation value of all training states with witness \(\mathcal{W}_0\).}
\item{Compute loss \code{l} as the maximum hinge loss across all training states, according to equation \ref{eq: loss function}.}
\If{\code i $>$ \(P\)}
    Compute loss variance over last \(P\) steps.
\EndIf
\item{Update all trainable variables according to gradients with respect to loss \code{l} using automatic differentiation.}
\item{Increment step counter \code{i}.}
\end{enumerate}
\EndWhile}
\end{algorithmic}}
\end{enumerate}

\end{algorithm}

Since the hinge loss aims to optimize the margin on both sides of the witness hyperplane \cite{svm_textbook}, the witness \(\mathcal{W}_0\) must be corrected to ensure perfect accuracy. Additionally, since we train with only a small subset of all separable states but must attain perfect accuracy, we cannot use simply loss functions like those used for weighted SVMS \cite{weight_SVM} or cost sensitive SVMs \cite{cost_SVM} in 
Algorithm \ref{alg: W0 training}. 

To correct \(\mathcal{W}_0\), we find the separable mixed state \(\rho_s\) with the lowest expectation value, which serves as the most misclassified state. This is consistent with our approach in \cite{rosebush}. As discussed in section \ref{sec: Overall Algorithm} and Algorithm \ref{alg: main}, we use another differential program to find \(\rho_s\). We describe this program below in section \ref{sec: DMSO}.

\section{\label{sec: DMSO} Differential Mixed State Optimization}
To optimize over the space of mixed separable states, we consider all permutations of all bipartitions of the complete system. We consider only bipartitions of the system as further partitioning is still possible when each subsystem is represented completely by arbitrary density matrices. Each subsystem could have any degree of separability \cite{guhne}.



%
Recall that a separable state \(\rho\) is defined most generally as one which can be written as a convex combination of product states \cite{guhne}, or 

\begin{equation}
\label{eq: sep def}
    \rho = \sum_i p_i\rho_{i}^{(a)} \otimes \rho_i^{(b)}.
\end{equation}.

Here \(\rho^{(a)}\) and \(\rho^{(b)}\) represent any mixed states in their respective subsystems, entangled or otherwise, meaning that if we parameterize them completely than we need not consider any smaller subsystems.

This parameterization is designed to represent the set of separable states, and we prove
that it carries the minimum number of mixed-state parameters per subsystem necessary to span
this set. This lower bound complements the upper bound on the number of pure states
guaranteed by Caratheodory's theorem \cite{caratheodory, horodecki}. Within this
parameterization we search for the most misclassified separable state,
\begin{equation}
    \label{eq: most misclassified def}
    \rho_s = \arg\min_{\rho}\, \mathrm{Tr}(\rho \mathcal{W}_0),
\end{equation}
for any prototype witness $\mathcal{W}_0$, where the minimization is taken over all separable
states $\rho$. The full parameterization, together with our derivation of the minimum number
of mixed states per subsystem necessary to span the separable set, is provided in Appendix
\ref{sec: DMSO Parameterization}.

Unlike prior work on ML derived entanglement witnesses \cite{PhysRevApplied.19.034058, new_4qubit_ref, rosebush}, our implementation can accommodate any system of qubit or qudits. We algorithmically generate variables to optimize across the space of separable mixed states based on a user specified dimension \(d\) and number of qudits \(N\). We outline this algorithm below and in algorithm \ref{alg: DMSO overall}.

For each bipartiton we also represent additional trainable mixed state subsystems to cover the total number of permutations. In appendix \ref{sec: DMSO Parameterization} we prove that a convex combination of \(v(N, n, d)\) mixed states for each bipartition and permutation is necessary, with

\begin{equation}
\label{eq: v(N, n, d)}
v(N, n, d) = \frac{d^{2(N-n)} + d^{(N-n)}}{2}
\end{equation}

terms in each convex combination and where \(n\) represents the number of qubits in the smaller of the two subsystems in each case. 

Additionally there are exactly \(\binom{N}{n}\) possible permutations of each bipartiton, so we must represent \(\mathcal{N}\) identical separable states as given by  

\begin{equation}
\label{eq: num_copies}
\mathcal{N}(N, n, d) = \binom{N}{n} \times \left(\frac{d^{2(N-n)} + d^{(N-n)}}{2}\right)
\end{equation}

In total we represent \code{S} product states, with 

\begin{equation}
\label{eq: num_product_states}
\text{\code{S}}(N, n, d) = \sum_{n = 1}^{\lfloor N/2 \rfloor}\binom{N}{n} \times \left(\frac{d^{2(N-n)} + d^{(N-n)}}{2}\right)
\end{equation}.

To count the number of trainable parameters, we multiply the number of copies of separable states in equation \ref{eq: num_copies} by the number of parameters in each pair of separable states and sum over all bipartitions. For a state of \(n\) qudits of dimension \(d\) we create \(d^{2n}\) real parameters and \(d^{2n}\) imaginary parameters to form lower triangular matrices $L$ which we assemble into mixed states as given by equation \ref{eq: chol def} below. Overall, a single mixed state for a single bipartition includes  \(d^{2n + 1} + d^{2(N-n) + 1}\) trainable variables. We use an additional array of real valued trainable variables \code{p} to represent the coefficients of density matrices in the final convex combination of product states, per equation \eqref{eq: sep def}. The total number of trainable variables we use for the optimizer is given by

\begin{equation}
\label{eq: total_trainable_variables}
\text{\code{S}}(N, n, d) = \sum_{n = 1}^{\lfloor N/2 \rfloor}\binom{N}{n} \times \left(2d^{2n} + 2d^{2(N-n)} + 1\right) \times \left(\frac{d^{2(N-n)} + d^{(N-n)}}{2}\right).
\end{equation}

The memory usage for the optimizer for some small systems is illustrated in table \ref{tab: memory}. 

\begin{table}[h]
    \centering
    \begin{tabular}{c|c|c}
         \textbf{d} & \textbf{N} & \textbf{Memory Usage}\\
         \hline
         2 & 3 & 9.8 KB\\
         2 & 4 & 189.0 KB\\
         2 & 5 & 3.3 MB\\
         2 & 6 & 62.5 MB\\
         2 & 7 & 1.2 GB\\
         3 & 2 & 3.5 KB\\
         4 & 2 & 10.4 KB\\
         5 & 2 & 24.2 KB\\
         6 & 2 & 48.7 KB\\
         7 & 2 & 88.3 KB\\
         3 & 3 & 195.5 KB\\
    \end{tabular}
    \caption{Optimizer memory usage for small systems. The overall memory usage is higher due to Tensorflow features, intermediate variables, and the size of the software itself.}
    \label{tab: memory}
\end{table}

All the swapping operators between qudits in the system are represented in an array of tensors with the name \code{swaps}. We run the optimization loop for some number of iterations, \code{steps}, typically \(2000\).

We initialize each element of our trainable variables by sampling from a uniform distribution between -1 and 1. This is arbitrary, and we find no difference in results by initializing variables in other ways.

In each step of the training loop, we create states for the subsystems in each distinct bipartition according to the Cholesky decomposition formula \cite{chol}

\begin{equation}
    \label{eq: chol def}
    \rho = \frac{LL^{\dag}}{Tr(LL^{\dag})}
\end{equation}

for a lower triangular matrix \(L\) with real, non-negative diagonal elements.

We repeat this process to produce \code{S}\((N, n, d)\) pairs of density matrices, enough to span the subspace corresponding to every permutation of every bipartition of qubits. We then take their tensor products and apply each swapping operator stored in the array \code{swaps} to the corresponding density matrices. Finally, we can take the convex combination of these density matrices with coefficients \code{p} to produce a final separable state \(\rho_{test}\), but we first constrain each element of \code{p} as 
\begin{equation}
\label{eq: p constrained}
\text{\code{p[i]}} \gets \frac{\text{\code{p[i]}}^2}{\sum_i \text{\code{p[i]}}^2}.
\end{equation}

These density matrices can assume any rank depending on the value of all free parameters. They can be arbitrarily pure or mixed.

At the end of every optimization step we compute the loss as \code{l} $\gets$ \(Tr(\rho_{test} \mathcal{W}_0)\), which is the expectation value of the prototype witness from the measurement training algorithm in section \ref{sec: Training Measurements}. We do one step of gradient descent in order to minimize the loss and find the most misclassified separable state for the witness \(\mathcal{W}_0\).

Similar to our measurement training algorithm, we use a stopping criterion based on the variance in the loss. We define a loss threshold \(t\), typically \(10^{-6}\) and patience \(p\), typically \(2000\) steps. We increment the number of steps completed with counter \(i\). We wait \(p\) steps and then compute the variance in the loss over the last \(p\) steps and stop the optimization when the variance drops below \(t\). The small threshold and long patience ensures the loss has stabilized at a minimum, which helps ensure the optimizer has found the most misclassified separable state up to numerical precision.

The algorithm we use for optimizing across separable states is summarized below in algorithm \ref{alg: DMSO}.

\begin{algorithm}[H]
\caption{\code{optimizeMixedSep(}\(\mathcal{W}_0\)\code{, d, N, steps, lr)}}\label{alg: DMSO overall}
\label{alg: DMSO}
\textbf{Problem Input:}
\begin{itemize}
    \item Number of qudits $N$
    \item Dimension of qudits $d$
    \item Prototype witness $\mathcal{W}_0$
\end{itemize}
\textbf{Parameters:}
\begin{itemize}
    \item Learning rate \code{lr}
    \item Patience \(p\).
    \item Loss variance threshold \(t\).
\end{itemize}
\textbf{Output:}
Most misclassified state of $\mathcal{W}_0$ given by $\rho_s$.

\textbf{Steps:}
\begin{enumerate}
\item{Create trainable pure state density matrices $L$ with nonnegative real diagonal elements. Each matrix $L$ represents systems of up to N/2 qudits.}
\item{Compute all swaps of qudits for all bipartitions of the system.}
\item{
\begin{algorithmic}
\While{Loss variance $\geq$ \(t\) or \(i \leq p\) }
\begin{enumerate}
\item{Assemble separable state $\rho$ from pairs of pure state density matrices $L$, across all permutations of all bipartitions of the system, computing permutations with the swaps.}
\item{Compute expectation value $Tr(\rho \mathcal{W}_0)$.}
\item{Use automatic differentiation to update each variable in each matrix $L$ with learning rate \code{lr}.}
\If{i $\geq$ \(p\)}
    Compute loss variance over last \(p\) steps.
\EndIf
\item{Increment step counter \(i\)}

\end{enumerate}
\EndWhile
\end{algorithmic}}
\end{enumerate}
\end{algorithm}

\section{\label{sec: Adverse} Adversarial Training}
We can extend this method to produce witnesses with higher noise tolerance. Our approach resembles Generative Adversarial Training, in which one model generates new training examples meant to be misclassified by the other \cite{gans}. In this sense the two models compete, as the generator must find more extreme edge cases while the classifier learns to classify all examples correctly. In our case, the generator is our mixed state optimizer presented above in section \ref{sec: DMSO}, while our classifier is the same measurement training framework as described above in \ref{sec: Training Measurements}. We merge the two training algorithms such that we alternate between running one step of the main training loops in the algorithm \ref{alg: W0 training} and procedure described in section \ref{sec: DMSO}. To make progress, we save the latest witness, and all states considered by the optimizer. We also continue to use just one copy of the target entangled state as training data. At the end of the algorithm, we run the separable state optimization again and correct the witness to ensure all separable states have negative expectation value, just as stated in \ref{alg: main}. This step is still necessary because our loss function encourages prototype witnesses with separation between both the entangled and separable training states. 

This approach has two significant benefits and one drawback. As we show in section \ref{sec: Examples}, adversarial training produces witnesses with much higher noise tolerances than training with the separable eigenstates. Additionally, regardless of the size of the system, we can train the models while storing only one new state during each training step, and the necessary number of training steps does not appear to increase dramatically with the size of the system. Instead, the training data size depends on the particular target state and the chosen stopping criterion. The number of training steps matches the number of separable training states and is shown in Figure \ref{fig: epochs} below. 

Effectively, this capability suggests adversarial training allows for an exponential reduction in training data sizes over \cite{rosebush}, down to \(O(1)\). We take advantage of only the most relevant separable training samples and our loss function which only considers the state with the highest hinge loss. The sole drawback is that this approach has longer runtimes than that of the main method presented in \ref{sec: Overall Algorithm}, up to convergence on good witnesses based on the stopping criteria we derived empirically, and in terms of time per training step. For 3, 4 and 5 qubit systems, respectively, the adversarial method takes about 75 ms, 290 ms, and and 2.9 s per training step, whereas the non-adversarial method takes about 53 ms, 181 ms, and 1.8 s per training step. The total runtimes for generating qubit witnesses using the adversarial and non-adversarial methods are shown below in Figure \ref{fig: runtimes}.

For this method, we also use a loss variance threshold and patience as stopping criteria. We use a small threshold, usually \(10^{-6}\), and patience of \(2000\) epochs, which we found empirically through trial and error. These stopping criteria allow for the best witnesses while ensuring convergence within a reasonable amount of time. When this method is close to convergence, the witnesses change less between iterations than witnesses found with the non-adversarial method, due to the training states generated by the optimizer being very close together. Due to this stability, the loss changes less between iterations as well and a smaller loss variance threshold is appropriate. 

\begin{figure}
    \centering
    \includegraphics[width=0.6\linewidth]{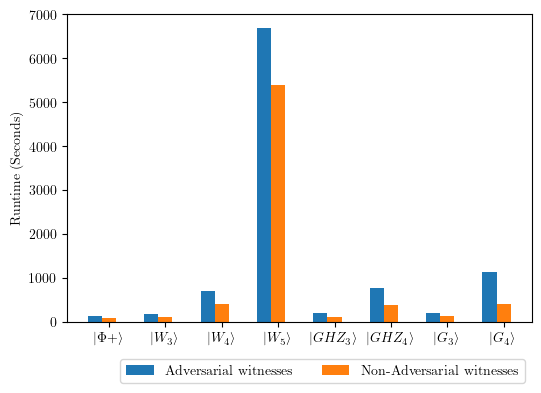}
    \caption{Runtimes, in seconds, of adversarial vs non-adversarial methods.}
    \label{fig: runtimes}
\end{figure}

\begin{figure}
\centering
\includegraphics[width=0.6\linewidth]{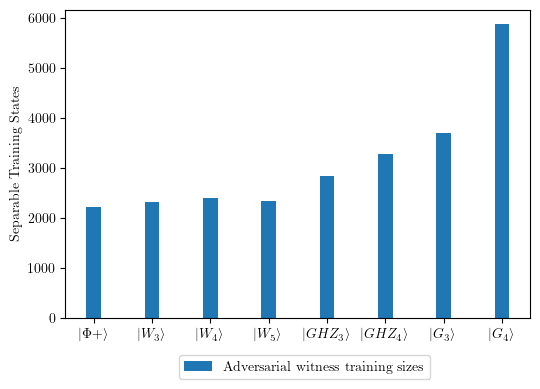}
\caption{Number of separable training states of adversarial qubit witnesses.}
\label{fig: epochs}
\end{figure}

In section \ref{sec: Examples} below we present examples of a wide variety of witnesses for entangled states of qubits and qudits found using both adversarial training and non-adversarial training.

\section{\label{sec: Examples} Examples}

In this section we present the noise tolerances for witnesses found with our methods, given specific numbers of measurements. We first list a range of analytical results for some of the same states, and then present our best witnesses found using both adversarial and non-adversarial training.

For the \(|\Phi^+\rangle\) state, we extended the stabilizer method for finding witnesses for GHZ-type states \cite{toth} down to 2 qubits, for a fair comparison between witnesses with 2 measurement settings, as opposed to 4 global measurement settings required by a witness constructed from the CHSH criterion. 


The state \(|G_N\rangle\) is defined in \cite{Ghio_2018} as an \(N\) qubit hypergraph state with only a single maximum cardinality hyperedge. A maximum cardinality hyperedge means that each qubit is entangled with every other through an extended CNOT gate. Such states can be leveraged to exploit correlations between all qubits in the system and so are useful for common quantum computing algorithms \cite{Ghio_2018}. In general, hypergraph states with a maximum cardinality hyperedge require the most measurement settings for both the fidelity method and stabilizer methods. We chose to compare our method to these analytical methods for the case of a state with a single hyperedge like this for the simplicity preparing the state, the best illustration of the advantages of our method, and for the practical utility of such states in quantum computing.

To construct an analytical witness for an \(N\) qubit hypergraph state with a single maximum cardinality hyperedge, the stabilizer method can construct a witness with \(N\) measurements, while the fidelity method produces a witness with \(\frac{3^N -1}{2}\) measurements \cite{Ghio_2018}. However, the fidelity method produces witness with much higher noise tolerance. 

We also compare to the limits of Werner state noise defined in equation \eqref{eq: noise}, past which states are known to be separable \cite{border_separable, taming}.

We compare analytical qudit witnesses and our generated qudit witnesses below in tables \ref{tab: analytical witnesses}, \ref{tab: adversarial results}, \ref{tab: non-adversarial results} and figure \ref{fig: qubit witness results}. All results  were found using only a MacBook Pro with an M4 Pro Chip and 24 GB of memory. 

\begin{figure}
    \centering
    \includegraphics[width=0.8\linewidth]{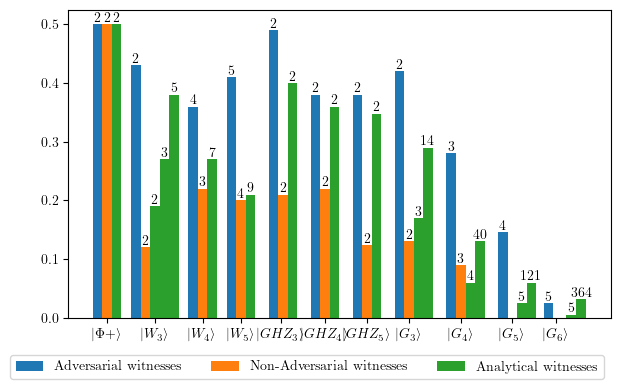}
    \caption{Noise tolerances of analytic and generated qubit witnesses. The number of measurement settings for each witnesss is indicated above each bar. Each group of bars are shown above the label indicating the target state.}
    \label{fig: qubit witness results}
\end{figure}

Figure \ref{fig: qubit witness results} illustrates the limitations of using the eigenstates as training data and highlights the strength of the adversarial training method, particularly with more complicated states like hypergraph states.

\begin{table}[h]
\caption{Qualities of Analytical Witnesses for Qubit States}
\label{tab: analytical witnesses}
\begin{ruledtabular}
\begin{tabular}{c c c c c c c}
\textbf{Target State} & \textbf{d} & \textbf{N} & \textbf{Number of Measurements} & \textbf{Noise Tolerance} & \textbf{Method} & \textbf{Reference}\\
\hline
\textbf{\(\Phi+\)} & 2 & 2 & 2 & 0.50 & Stabilizers & \cite{toth}\\
\textbf{W} & 2 & 3 & 2 & 0.19 & Stabilizers & \cite{toth}\\
\textbf{W} & 2 & 3 & 3 & 0.27 & Stabilizers & \cite{toth}\\
\textbf{W} & 2 & 3 & 5 & 0.38 & Fidelity & \cite{guhne}\\
\textbf{W} & 2 & 4 & 7 & 0.27 & Fidelity & \cite{guhne}\\
\textbf{W} & 2 & 5 & 9 & 0.21 & Fidelity & \cite{guhne}\\
\textbf{GHZ} & 2 & 3 & 2 & 0.40 & Stabilizers & \cite{toth}\\
\textbf{GHZ} & 2 & 4 & 2 & 0.36 & Stabilizers & \cite{toth}\\
\textbf{\(|G_3\rangle\)} & 2 & 3 & 3 & 0.17 & Stabilizers & \cite{Ghio_2018}\\
\textbf{\(|G_3\rangle\)} & 2 & 3 & 14 & 0.29 & Fidelity & \cite{Ghio_2018}\\
\textbf{\(|G_4\rangle\)} & 2 & 4 & 4 & 0.06 & Stabilizers & \cite{Ghio_2018}\\
\textbf{\(|G_4\rangle\)} & 2 & 4 & 40 & 0.13 & Fidelity & \cite{Ghio_2018}\\
\textbf{\(|G_5\rangle\)} & 2 & 5 & 5 & 0.013 & Stabilizers & \cite{Ghio_2018}\\
\textbf{\(|G_5\rangle\)} & 2 & 5 & 121 & 0.06 & Fidelity & \cite{Ghio_2018}\\
\end{tabular}
\end{ruledtabular}
\end{table}

\begin{table}[h]
\caption{Qualities of Witnesses for Qubit States Generated With Adversarial Method}
\label{tab: adversarial results}
\begin{ruledtabular}
\begin{tabular}{c c c c c c}
\textbf{Target State} & \textbf{d} & \textbf{N} & \textbf{Number of Measurements} & \textbf{Noise Tolerance} &\textbf{Noise Tolerance Limit}\\
\hline
\textbf{\(\Phi+\)} & 2 & 2 & 2 & 0.50 & 0.666 \cite{border_separable}\\
\textbf{W} & 2 & 3 & 2 & 0.43 & 0.521 \cite{taming}\\
\textbf{W} & 2 & 4 & 4 & 0.36 & 0.526 \cite{taming}\\
\textbf{W} & 2 & 5 & 5 & 0.41 & -\\
\textbf{GHZ} & 2 & 3 & 2 & 0.49 & 0.571 \cite{taming}\\
\textbf{GHZ} & 2 & 4 & 2 & 0.38 & 0.533 \cite{taming}\\
\textbf{GHZ} & 2 & 5 & 2 & 0.38 & 0.533 \cite{taming}\\
\textbf{\(|G_3\rangle\)} & 2 & 3 & 2 & 0.42 & -\\
\textbf{\(|G_4\rangle\)} & 2 & 4 & 3 & 0.28 & -\\
\textbf{\(|G_5\rangle\)} & 2 & 5 & 4 & 0.15 & -\\
\end{tabular}
\end{ruledtabular}
\end{table}

\begin{table}[h]
\caption{Qualities of Witnesses for Qubit States Generated With Non-Adversarial Method}
\label{tab: non-adversarial results}
\begin{ruledtabular}
\begin{tabular}{c c c c c c}
\textbf{Target State} & \textbf{d} & \textbf{N} & \textbf{Number of Measurements} & \textbf{Noise Tolerance} &\textbf{Noise Tolerance Limit}\\
\hline
\textbf{\(\Phi+\)} & 2 & 2 & 2 & 0.58 & 0.666 \cite{border_separable}\\
\textbf{W} & 2 & 3 & 2 & 0.12 & 0.521 \cite{taming}\\
\textbf{W} & 2 & 4 & 3 & 0.22 & 0.526 \cite{taming}\\
\textbf{W} & 2 & 5 & 4 & 0.20 & -\\
\textbf{GHZ} & 2 & 3 & 2 & 0.21 & 0.571 \cite{taming}\\
\textbf{GHZ} & 2 & 4 & 2 & 0.22 & 0.533 \cite{taming}\\
\textbf{GHZ} & 2 & 5 & 2 & 0.1239 & 0.533 \cite{taming}\\
\textbf{\(|G_3\rangle\)} & 2 & 3 & 2 & 0.13 & -\\
\textbf{\(|G_4\rangle\)} & 2 & 4 & 3 & 0.09 & -\\
\end{tabular}
\end{ruledtabular}
\end{table}

The adversarial training method produces the best results by far, producing higher noise tolerances for W state witnesses with half as many measurements required as required by the fidelity method. It dramatically outperforms hypergraph state stabilizer witnesses, producing much higher noise tolerances than can be found even with the fidelity method while using less than the minimum number of measurements required by the stabilizer method. In particular, our best result is our 5 qubit hypergraph state, where we found a witness with just 4 measurements and more than double the noise tolerance of the best analytical witness requiring 121 measurements. These results are shown in table \ref{tab: adversarial results}. 

Hypergraph states are of particular relevance to quantum information processing \cite{hypergraph_nature}. A recent verification of the presence of a 4 qubit hypergraph state using the fidelity method hypergraph state witness required 256 measurement settings, the same as would be required by full quantum state tomography \cite{hypergraph_nature}, whereas with our witness we would only require 3 measurement settings.

While the non-adversarial method generally produces lower noise tolerances, it still offers witnesses with fewer measurement settings in the hypergraph case and maintains the advantage of shorter runtimes. These results are shown in table \ref{tab: non-adversarial results}. Both methods retain the advantage of universal automation for witness discovery and grant the freedom to choose the number of global measurement settings in each witness.

We note that for both methods, witnesses were found using default learning rates of 0.01 and trained for less than 6000 epochs.

Below we present additional results for both methods with qudit states, particularly those similar to the \(|\Phi^+\rangle\) or GHZ qubit states. These results demonstrate the versatility of our program and offer new witnesses for each target state.

\begin{table}[h]
\caption{Qualities of Witnesses for Qudit States Generated With Adversarial Method}
\label{tab: adversarial results qudits}
\begin{ruledtabular}
\begin{tabular}{c c c c c}
\textbf{Target State} & \textbf{d} & \textbf{N} & \textbf{Number of Measurements} & \textbf{Noise Tolerance} \\
\hline
\textbf{\(\sum_{i = 0}^{2} |ii\rangle\)} & 3 & 2 & 2 & 0.41\\
\textbf{\(\sum_{i = 0}^{3} |ii\rangle\)} & 4 & 2 & 2 & 0.47\\
\textbf{\(\sum_{i = 0}^{4} |ii\rangle\)} & 5 & 2 & 2 & 0.37\\
\textbf{\(\sum_{i = 0}^{9} |ii\rangle\)} & 10 & 2 & 2 & 0.32\\
\textbf{\(|000\rangle + |111\rangle + |222\rangle\)} & 3 & 3 & 2 & 0.33\\
\end{tabular}
\end{ruledtabular}
\end{table}

\begin{table}[h]
\caption{Qualities of Witnesses for Qudit States Generated With Non-Adversarial Method}
\label{tab: non-adversarial results qudits}
\begin{ruledtabular}
\begin{tabular}{c c c c c}
\textbf{Target State} & \textbf{d} & \textbf{N} & \textbf{Number of Measurements} & \textbf{Noise Tolerance} \\
\hline
\textbf{\(|00\rangle + |11\rangle + |22\rangle\)} & 3 & 2 & 3 & 0.37\\
\textbf{\(|00\rangle + |11\rangle + |22\rangle + |33\rangle\)} & 4 & 2 & 3 & 0.11\\
\textbf{\(|00\rangle + |11\rangle + |22\rangle + |33\rangle + |44\rangle\)} & 5 & 2 & 4 & 0.18\\
\textbf{\(|000\rangle + |111\rangle + |222\rangle\)} & 3 & 3 & 3 & 0.13\\
\end{tabular}
\end{ruledtabular}
\end{table}

The individual measurement settings, corresponding to the measurement tensors \(H\) and \(J\), along with coefficients \(c\), can be found in the supplementary material.

\section{\label{sec: Numerical Verfication} Numerical Verification}

\subsection{\label{subsec: test data generation} Numerical Test Data Generation and Testing}
To generate separable testing data, we sampled pure states independently for each susbsystem as the columns of Haar distributed unitary matrices \cite{haar}, and combined them to form mixed separable states according to the same permutations used for the optimizer, mentioned in section \ref{sec: DMSO} and detailed in appendix \ref{sec: DMSO Parameterization}. The coefficients for the convex combination of pure states compose a vector \(\textbf{x} \in \mathbb{R}^K\) defined by a symmetric Dirichlet distribution \cite{PhysRevApplied.19.034058} defined as 
\begin{equation} \label{eq: dirichlet} 
\text{Dir} (x| \alpha) = \frac{\Gamma({K\alpha})}{\left(\Gamma(\alpha)\right)^K} \prod_{j=1}^{K}x_j^{\alpha-1}
\end{equation}

The value \(K\) determines how many pure state density matrices appear in the convex combination. We choose \(K = d^N\), so that the space of states we sample ranges between those that are pure and of full rank. We set our ``concentration parameter" $\alpha$ to be small ($\alpha = 0.01$) to ensure that the states we sample remain high in purity, as shown empirically in \cite{PhysRevApplied.19.034058}. 
High-purity (bi)separable states are closest to the target state (in terms of fidelity, see literature on deriving witnesses using the ``fidelity-method" \cite{PhysRevLett.92.087902}), and are thus ideal candidates for certifying our witnesses near the boundary between (bi)separable and (multipartite) entangled states. 
 This bias towards separable test states closer to the target is reflected in the histogram in Figure \ref{fig: hist} below.

\subsection{\label{subsec: sample histogram} Sample Test Histogram}

For each of the witnesses in section \ref{sec: Examples}, we verified their validity by generating at least 100,000 test states. Figure \ref{fig: hist} presents one sample test histogram, representing the test results of a 3-qubit W state witness, with 2 measurements, generated with the adversarial training method. The states counted with the orange bars represent separable mixed states, and the blue bar count represents Werner states with white noise added from the target state according to equation \ref{eq: noise}. These Werner states include values of \(p\) up to the calculated noise tolerance for the witness.

\begin{figure}[h]
\centering
\includegraphics[width=0.5\linewidth]{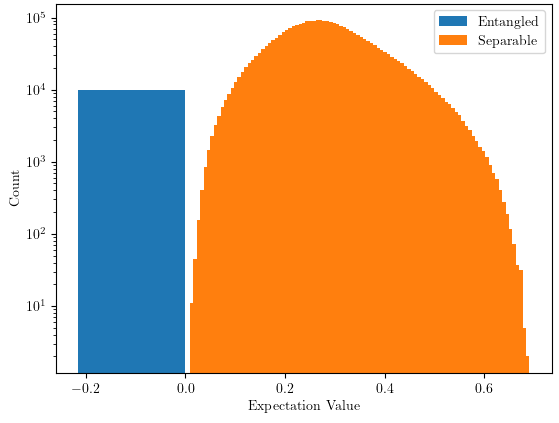}
\caption{Histogram of expectation values of test data for a 3 qubit W state witness with 2 measurement settings generated with our adversarial training method. Shown are the counts of 3 million separable test states (in orange) Werner states based on the target W state up to the maximum noise tolerance of the witness (in blue). Notice the separable states have only positive expectation values, meeting the definition of an entanglement witness. Here we use 100 bins for each type of state, with a binsize of about 0.007.}
\label{fig: hist} 
\end{figure}

\section{\label{sec: Physical Verification} Physical Verification}

\subsection{\label{subsec: Photonic Verification} Photonic Verification}
We verified the witnesses generated by our adversarial approach experimentally using photonic quantum states. We generated a rotated \(|\Psi+\rangle\) state, and used our ML method to generate a witness to detect this state. We show that by physically introducing white noise, we measure a witness expectation value crossing 0 at the same value of \(p\) as we compute by setting the expectation value of the witness with a state given by equation \ref{eq: noise} equal to zero.


 We used entangled photon pairs in the telecommunication band (1530-1600 nm) to realize our target state. To generate our entangled state we use Spontaneous Parametric Down Conversion (SPDC) with a Periodically Poled Silica Fiber \cite{eric's_PPSF_paper}. The experimental set up is depicted in Fig \ref{fig: photonic experiment diagram}. We introduce C-band (1530-1565 nm) and L-band (1570-1600 nm) noise through an Amplified Spontaneous Emission (ASE) source in each band, which we combine through an optical coupler. We chose to add ASE noise due to its broad, flat spectrum, which equally influences both frequency bands and so both qubits. We ensure balanced noise across both bands with a Finisar 4000A Wavershaper. We then combined the SPDC signal with the ASE noise.  We then use a C-L band splitter to separate the entangled photon pair and put each band through a polarization analyzer, each consisting of a Quarter Wave Plate (QWP), Half Wave Plate (HWP), and Polarizer (P). A short description of each of these optical elements and how they are used to implement a projective measurement is included in Appendix \ref{sec: optical elements appendix}. The polarization analyzer enables projective measurements in arbitrary bases, so the expectation value of any operator can be measured. Finally we measure the resulting photon counts through single photon detectors. Our experimental setup is shown below in Figure \ref{fig: photonic experiment diagram}.

Since each local measurement setting for our witness is represented by Hermitian operators, they can be diagonalized and decomposed into projective measurements corresponding to particular eigenstates. We used gradient descent to find waveplate settings which would rotate each eigenstate onto the \(|H\rangle\) state. Using a time interval analyzer, we compared the detection times of infrared photons in the C and L wavelength bands and found that detections coincide with higher likelihood after a particular delay. Since the entangled photons are generated at the same instant in the SPDC process, we expect to see the most coincidences after a fixed delay related to the optical path lengths in our setup. We determined the peak of these coincidences for each measurement and compared the number of counts to determine the relative probabilities of our photonic state being observed in each eigenstate.

\begin{figure}[H]
    \centering
    \includegraphics[width=0.6\linewidth]{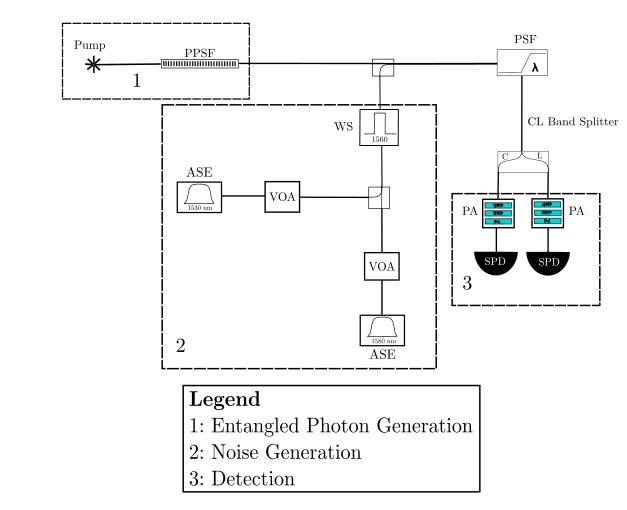}
    \caption{Experimental set up for (1) creating the noisy bipartite entangled photonic state (box 1); (2) adding noise to the entangled state (box 2); and (3) local projective measurements on each photon (box 3). We then combined the signal and the noise with a 50/50 splitter, put the total signal through a Pump Suppression Filter (PSF) before separating the C and L band components for measurement. We used Polarization Analyzers (PAs) consisting of a quarter waveplate, and a half waveplate to project the photons onto the polarizer before the single photon detectors.}
    \label{fig: photonic experiment diagram}
\end{figure} 

We can calculate the noise tolerance of a witness \(\mathcal{W}\) theoretically by setting \(Tr(\rho \mathcal W) = 0\) for \(\rho = p \frac{\mathbb{I}}{d^N} +(1-p)\rho_{t}\) for target state \(\rho_t\). Then 

\begin{equation}
\label{eq: computing theoretical noise tolerance}
    p  = \frac{tr(\rho_t \mathcal{W})}{tr(\mathcal{W}(\rho_t - \frac{\mathbb{I}}{d^N}))}.
\end{equation}

To compute the witness expectation value from experimental results, we need to find the expectation value of each measurement and compute a linear combination of them according to Equation \ref{eq: witness summations}. For the case of $N=2$ and $M=2$, we can write

\begin{equation}
    \label{eq: witness expectation value sum}
    \langle\mathcal{W}\rangle = c_0 + \sum_{i=1}^2 c_i \langle \bigotimes_j H_{ij} \rangle  + \sum_{i = 3}^{6} c_i \langle \bigotimes_j J_{(i-2), j} \rangle = c_0 + \sum_{i=1}^2 c_i \langle H_{i} \rangle  + \sum_{i = 3}^{6} c_i \langle J_{(i-2)} \rangle
\end{equation}

To find the expectation value of \(H_i\), denoted by \(\langle H_i\rangle\), we performed projective measurements onto each eigenstate of \(H_i\), which we call the set \(\{|\Phi_{ik}\rangle\}\). In this case, with two qubits, each global measurement has four eigenstates and requires four projective measurements. The corresponding eigenvalues we denote as \(\{\lambda_{ik}\}\), and those eigenstates have measured detection probabilities \(\{p_{ik}\}\). The probability \(p_{ik}\) of observing a photon pair in each eigenstate is the ratio of the number of photons detected in that state to the total number of photons detected across all eigenstates of the global measurement.

 The expectation value \(\langle H_i\rangle\) is given by 

 \begin{equation}
     \langle H_i\rangle = \sum_{k=1}^4 p_{ik}\lambda_{ik}
 \end{equation}.
 
 We can also infer the expectation values \(\langle J_i \rangle\) of the inferred measurements \(J_i = \bigotimes_j J_{ij}\) in Equation \ref{eq: witness expectation value sum} based on the outcomes of the projective measurements onto the eigenstates \(\{|\Phi_{ik}\rangle\}\). We provide a detailed explanation of how inferred measurement expectation values can be calculated in the 2-qubit case in Appendix \ref{sec: 2 qubit inferred measurement example}.

To compute the expectation value of the witness \(\langle W \rangle\) we took the weighted sum

\begin{equation}
    \langle W \rangle = c_0 + \sum_{i = 1}^{2} c_i \langle H_i \rangle + \sum_{i = 3}^{6} c_i \langle J_{i-2} \rangle.
\end{equation}.

An explanation of how we recover the probabilities \(p_{ik}\) of detecting eigenstate \(k\) of eigenvector \(i\) is included in Appendix \ref{sec: Coincidence histograms}.

Since our measurements are Hermitian operators, they can be diagonalized into projective measurements which form a basis, according to the spectral theorem. The p value is the ratio of noise to total photon flux in an arrival time measurement with no projective elements, the latter of which is equivalent to measuring the expectation value of a projective measurement of every element in a basis onto the target state, as \(Tr(\rho) \equiv \sum_i \langle \Phi_i | \rho | \Phi_i \rangle = 1\). The details of how we obtain \(p\) from our measurement are explained in Appendix \ref{sec: Coincidence histograms}.

We measured the coincidences for two levels of noise added by the ASE noise, shown as the blue and orange data points in Figure \ref{fig: p value vs expectation value}. For each ASE noise level, we can obtain different p-values by including different numbers of noise bins in our calculation, discussed in Appendix \ref{sec: Coincidence histograms}. The overlap between the blue and the orange data points shows that including more bins for noise is equivalent to physically injecting noise into the channel. 

In Figure \ref{fig: p value vs expectation value} below we show that the p value where the witness produces a positive expectation value in the range between 0.4 and 0.5, consistent with our calculated noise tolerance of 0.44. The noise fluxes quoted in the legend refer to the total counts received by both detectors without the presence of the entangled signal photons. Since the two lines closely overlap, we can declare that the witness noise tolerance matches our theoretical prediction in the presence of both physical and synthetic noise. For this original, unique witness, our simulated noise tolerance matches our experiment, meaning this witness is useful for detecting the actual rotated bipartite photonic qubit state we generated. 

\begin{figure}[H]
    \centering
    \includegraphics[width=0.8\linewidth]{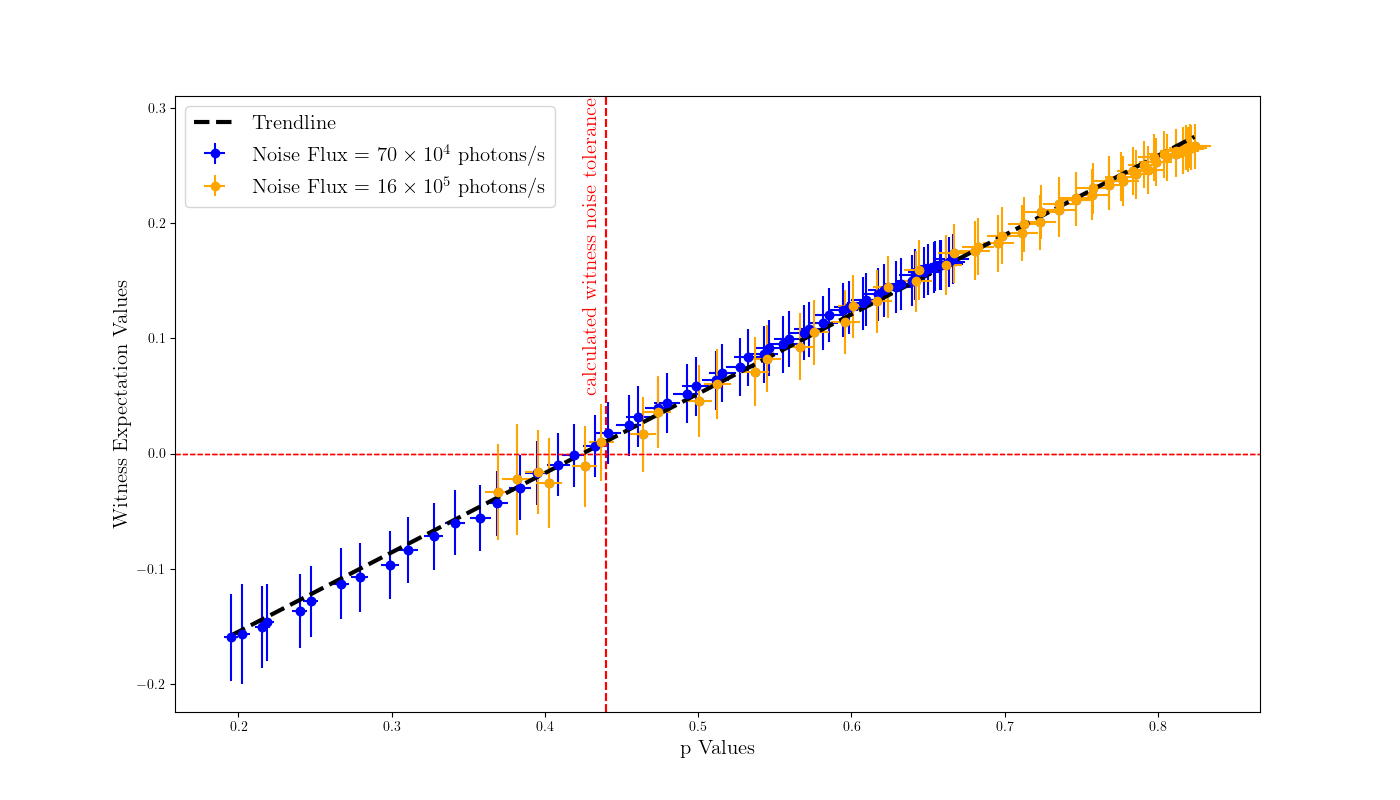}
    \caption{Expectation value of the witness vs p value. We used two values of physical noise, shown in the legend as the two approximate values of photon flux. We simulated many different levels of noise by varying the peak width, or including more noise bins when we calculate the number of coincidences. The black dotted trendline shows a linear fit.}
    \label{fig: p value vs expectation value}
\end{figure}

With this photonic experiment, we confirmed that the measured expectation value our ML derived witness crosses zero close to our calculated noise tolerance. This witness for this physical mixed state behaves as expected with this physical experiment. 

\subsection{\label{subsec: IBMQ Verification} Verification on Superconducting Qubits}
In what follows, we describe experiments on the superconducting device \texttt{ibm\_quebec} where we determine the efficacy of our hypergraph witness when faced with intrinsic decoherence and dephasing of the system; a more in-depth discussion of these phenomena may be found in \cite{PhysRevA.104.022609}. We compare our witness against the existing fidelity method \cite{Ghio_2018}, in terms of both the number of measurement settings required (also presented in Fig. \ref{fig: qubit witness results}) and the relative noise tolerance. We find two important results: (1) our hypergraph witnesses have superior noise tolerance to the fidelity method while requiring fewer measurement settings, and (2) the witness's prediction is in accordance with existing genuine multipartite entanglement measures \cite{taming}. 

Fig. \ref{fig: quantum_circuit} shows the quantum circuit used to compare the two candidate witnesses. We prepare $MN$ identical copies of maximum-cardinality, 3-qubit hyper graph states $G_3$ where $M$ is the number of measurement settings required for a given witness and $N$ is the number of copies required to estimate an expectation value. The circuit used to prepare $G_3$ is marked as the ``state preparation” stage in Fig \ref{fig: quantum_circuit}. After state preparation, we allow the system to idle for a fixed duration, ranging between 0 and 16 $\mu$s, which is used to introduce a controlled amount of noise to the prepared state. The noise injected into the system is a combination of both incoherent and coherent errors. The former of the two results from natural decoherence and dephasing of qubits (characterized by $T_1$ and $T_2$ times, respectively), while the latter arises from ``crosstalk" that introduces unwanted coherences between qubits. As a consequence, the system is \textit{not} expected to follow the white-noise model synthesized in our optical experiments, studied in Section \ref{subsec: Photonic Verification} of this work. After each delay, we perform measurements for both witnesses and estimate their expectation values; the measurement circuits themselves are implemented as local unitary rotations followed by Z-basis measurements. The number of measurement settings required for certifying hypergraph states of maximum cardinality is shown in Table \ref{tab: adversarial results} and Table \ref{tab: non-adversarial results}, using non-adversarial and adversarial training, respectively. We employ witnesses derived from adversarial training that require a total of two measurement settings ($M=2$, see Table \ref{tab: adversarial results}) in this example. In contrast, witnesses derived using the fidelity method require fourteen measurement settings ($M=14$), using the method proposed in \cite{Ghio_2018}.


\begin{figure}
    \centering
    \includegraphics[width=0.7\linewidth]{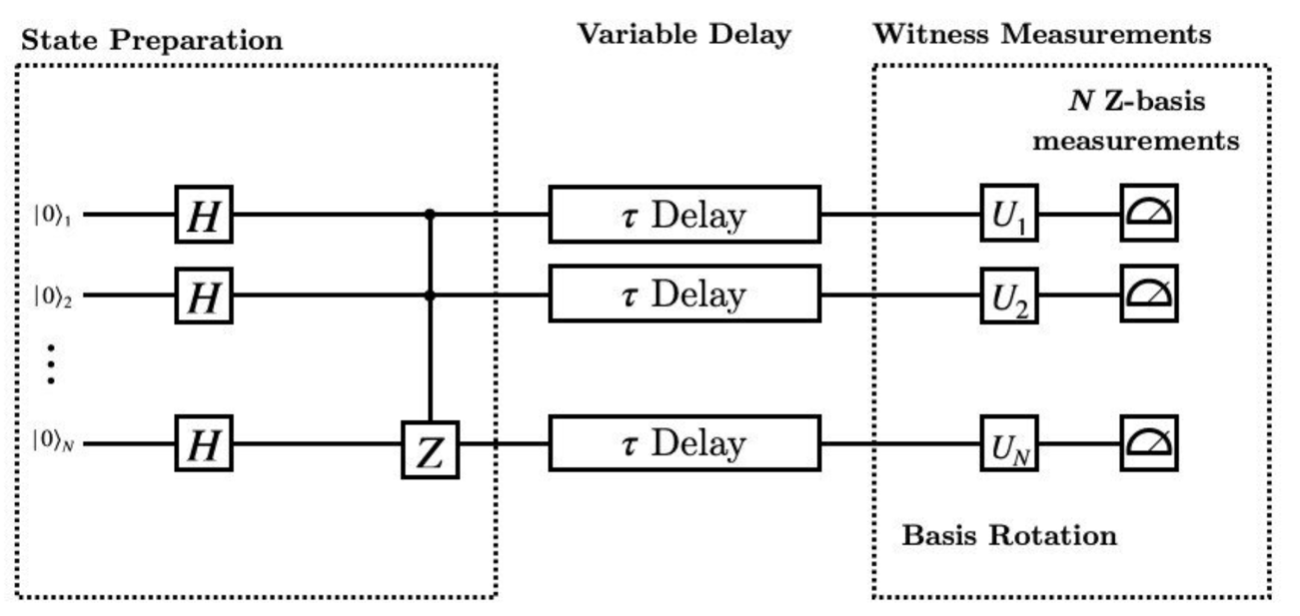}
    \caption{The quantum circuit used to generate hypergraph states $G_N$ of \textit{maximum cardinality} and measure our hypergaph state witness. It includes Hadamard gates on each qubit and a multi-control Z gate. For the purpose of this demonstration, we assume $N=3$.  Delay is added to each qubit to inject a combination of coherent and incoherent errors. Each qubit is then projected onto each eigenstate of each measurement to measure either machine-learning-derived- or fidelity-witness}.
    \label{fig: quantum_circuit}
\end{figure}



Fig. \ref{fig: superconducting expectation values} shows how both fidelity and machine-learning-derived methods predict deterioration of entanglement over a span of 16 $\mu$s. We find that the machine-learning-derived witness offers noise tolerance superior to the fidelity method while remaining in accordance with the expected resource-intensive entanglement of the estimated state $\rho_{est}$.  The ``noise tolerance" of a witness in this experiment is defined as the amount of idle delay required for the expectation value to become positive. We find that while the fidelity-method can only sustain 8 $\mu$s of idle time, our proposed witness can withstand $\sim 11 \;\mu$s while requiring a fraction of the measurement settings (2 vs. 14). The difference in 3 $\mu$s is indeed significant, since most superconducting gate-operations take place on the order of 10s of $n$s on IBM hardware \cite{PhysRevLett.127.080505}.

Following the comparison between fidelity and our machine-learning-derived method, we estimate the genuine multipartite entanglement present in the prepared state after each value of delay. To do so, we performed a 3-qubit, quantum state tomography using software from the Qiskit Experiments software library \cite{kanazawa2023qiskit} to produce an estimated density matrix $\rho_{est}$. The multipartite entanglement possessed by $\rho_{est}$ is determined by evaluating a semi-definite program, also referred to as the ``PPTmixer" measure, as proposed in \cite{taming}. Physically speaking, the entanglement bound represents the lowest possible expectation value attainable by \textit{any fully-decomposable} witness. A witness $\mathcal{W}$ is said to be fully-decomposable if for every subset $M$ it can be expressed as a sum of two positive semidefinite operators $P_M$ and $Q_M$ where $\mathcal{W} = P_M + Q_M^{T_M}$. The superscript $T_M$ indicates the partial transposition across the subsystem $M$ and $\bar{M}$ (the complement of $M$). A full proof of the monotonicity properties for this measure can be found in \cite{taming}. Since the outcome of the PPTmixer measure itself corresponds to a witness expectation value, we expect it to assume a negative value for states that possess genuinely multiparticle entanglement and a positive value otherwise.

The experimental prediction of the PPTmixer measure is indicated by \textit{green} datapoints in Fig. \ref{fig: superconducting expectation values}. We point out that the SDP predicts the presence of entanglement, even after 16 $\mu$s of idle time on the processor. Thus, the degradation in entanglement detected by both witnesses implies coherent errors are the dominant source of ``noise" over ideal periods of this scale. This result is once again in line with the predictions of existing models \cite{PhysRevA.104.022609}, since incoherent errors are introduced to the system at a time scale much longer than the experiments in this work. More specifically, the coherence ($T_1$) and dephasing times ($T_2$) of the transmon qubits used in this experiment are on the order of 300 and 450 $\mu$s respectively.



\begin{figure}
    \centering
    \includegraphics[width=0.5\linewidth]{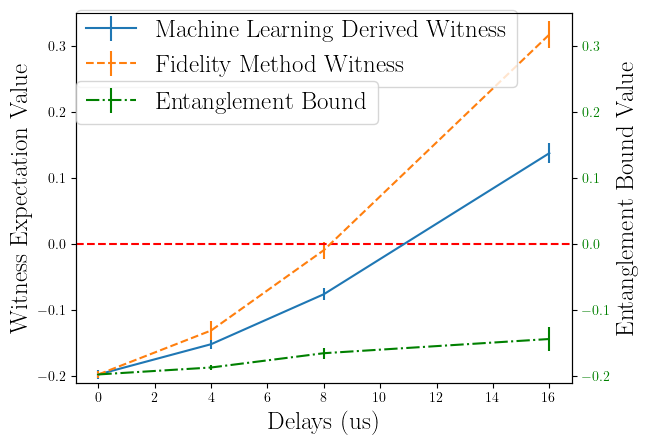}
    \caption{The expectation values of the fidelity witness and our witness with increasing amounts of noise on IBMQ hardware. The entanglement bound (PPTmixer, presented in \cite{taming}) is a measure of entanglement derived from full quantum state tomography to ensure both witnesses are correctly classifying the noisy states. The PPTmixer measure ranges between negative (for states possessing genuine multiparticle entanglement) and positive values. Critically, we find that our machine-learning-derived witness correctly identifies entanglement in instances where the fidelity-method witness fails (between 8 and 11 $\mu$s) while requiring fewer measurement settings ($M=2$ vs. $M=14$).}
    \label{fig: superconducting expectation values}
\end{figure}

\section{\label{Conclusion} Conclusion}

In this work we introduced two new methods for generating entanglement witnesses with ML. These methods are defined by trainable measurement tensors in which a user may specify exactly how many measurements with which they wish to construct a witness. We extended these methods  to create witnesses for arbitrary states of qubits or qudits using the same software script, providing great convenience for experimentalists aiming to benchmark quantum technologies with entanglement witnesses. In one method, we trained the witnesses using the eigenstates of the generalized Pauli operators, while in the other we trained the witnesses using an adversarial scheme in which the witness training algorithm competes with an optimizer searching for misclassified separable states. For both methods we introduced a new optimization method for separable mixed states which we prove in Appendix \ref{subsec: Spanning the set of k-Separable States} can reach every possible separable state. Our methods have been patented under US Patent/Serial Numbers 63/790,389, 19/644,503, and Canadian patent application number 3,307,915.

Finally, we demonstrated the effectiveness of both methods across a variety of qubit and qudit states, showing that in most cases our approaches can find witnesses with fewer measurement settings and/or higher noise tolerances than what is possible by constructing witnesses with the stabilizer formalism and fidelity method. We verified these states numerically with a procedure described in section \ref{subsec: test data generation}, with complete results accessible in the supplementary materials. 

Additionally, we carried our physical verification of several witnesses in section \ref{sec: Physical Verification}, demonstrating that the p value we compute for witness from photonic and superconducting qubit states matches the p value which can be found through experiment.

\section{\label{Acknowledgement} Ackowledgements}
We acknowledge funding from Canada Research Chair program (CRC-2022-00233), NSERC Discovery (RGPIN-2019-07019 and RGPAS-2019-00113), Quantum Alliance Consortium (ALLRP 578462 - 22 and 578460 - 22) Horizon Europe HYPERSPACE Project 101070168, Canada Foundation for Innovation and Ontario Research Fund (CFI-ORF-33415).

\bibliography{b}

\appendix
\section{\label{sec: DMSO Parameterization} Optimizer Parameterization}
\subsection{\label{subsec: Generating Swaps} Generating \code{swaps}}
To represent all possible separable mixed states, we only need to consider biseparable states, as they encompass all k-separable states \cite{acin2001classification,gabriel2010criterion}. To parameterize a biseparable state, we iterate over up to half the number of qubits and divide the system at each step. We represent each bipartition in an array \code{b} which lists all qubits in each susbsystem in two smaller arrays. For example, a bipartition of a 5 qubit system with \(n = 2\) would be represented as \(\text{\code{b}} = [[0,1],[2,3,4]] \) where the qubits are labeled with numbers, and \code{b[0]} represents the smaller bipartiton, and \code{b[1]} represents the larger bipartition.

To compute all possible swapping operators for each system, we first pre-compute all possible swapping operators for two qubits by computing the matrices corresponding to a swap operator applied to the two systems, as described in \cite{rosebush}. Then we generate all the possible pairs of qubits to swap across the bipartition, which is equivalent to choosing any \(n\) qubits from the overall \(N\) qubit system, expressed in terms of the swaps to apply. The algorithm \code{getSwaps(d, N)} iterates over all bipartitions \code{b} in this way and applies another algorithm, \code{getSwitches(b, currentSwitch, allSwitches)}, to compute all the pairs of qubits to search. Finally, \code{getSwaps(d, N)} looks up the corresponding swapping operators, multiplies them together, and returns an array of operators to apply to each product state.

For each bipartiton, we apply the recursive algorithm \code{getSwitches(b, currentSwitch, allSwitches)} to compute the swaps without needing to hard-code any particular system size. This algorithm is detailed below.

\begin{algorithm}[H]
\caption{\code{getSwitches(b, currentSwitch, allSwitches)}}\label{alg: get_switches}
\begin{algorithmic}
\If{\code{b[0] == []} or \code{b[1] == []}}: \Comment{Base case.}
        \State \Return allSwitches
\Else
        \For{i $\leq n$}:
            \For{j $\leq N-n$}:
                \State currentSwitchTemp $\gets$ currentSwitch + \code{[[b[0][i],b[1][j]]]}
                \State allSwitches += [currentSwitchTemp]
                \State bReduced $\gets$ \code{[b[0][i+1:], b[1][j+1:]]} \Comment{To avoid duplicates, consider only qubits which haven't yet been swapped for the recursive step.}
                \State allSwitches $\gets$ \code{getSwitches(bReduced, currentSwitchTemp, allSwitches)}
            \EndFor
        \EndFor
        \State \Return allSwitches
\EndIf
\end{algorithmic}
\end{algorithm}
\subsection{\label{subsec: Spanning the set of k-Separable States} Minimum Number of Mixed States Necessary to Span the Biseparable Set}

In this appendix, we consider the ``coverage" of our parameterization of the set of biseparable states, described in the main text of this work. We argue that a necessary but \textit{not sufficient} condition for parameterizing the biseparable states $\rho^{(BS)}\in\mathcal{B}$ depends on the linear independence between blocks when expressing $\rho^{(BS)}$ in the block representation.
For the purpose of this argument, we define $\rho^{(BS)}$ as a classical mixture of states belonging to all possible bipartitions. Consider, for instance, the three-qubit case where
\begin{equation}
    \rho^{(BS)} = p^{(A|BC)}\rho^{(A|BC)} + p^{(B|AC)}\rho^{(B|AC)} + p^{(C|AB)}\rho^{(C|AB)},
    \label{eq:general_bisep_state}
\end{equation}
where $p^{(\alpha|\beta \gamma)}$ and $\rho^{(\alpha|\beta \gamma)}$ represent the weights and states of the $(\alpha|\beta \gamma)$ partititions, respectively. We use the notation $(\alpha|\beta\gamma)$ to represent a bipartition where qubit $\alpha$ is separable from the bipartite system consisting of $\beta$ and $\gamma$ qubits.
In this section, we consider the parameterization of a single bipartition, $\rho = \rho^{(\alpha|\beta\gamma)}$. We refer to this state as
\begin{equation}
    \rho = \sum_i^S p_i\rho^{(a)}_i\otimes \rho^{(b)}_i,
    \label{eq:biseparable_decomposition}
\end{equation}
where $i$ is an index used to identify terms in a convex combination. The indices `a' and `b' represent subdivisions of a given bipartition. For example, a composite system of qubit `A' and the joint Hilbert space of qubits `B' and `C', as one possible bipartition shown in \eqref{eq:general_bisep_state}, are an example of one such subdivision. The task of sampling each individual state $\rho^{(a)}_i$ and $\rho^{(b)}_i$ is well-understood and often performed using Cholesky decomposition \cite{horn2012matrix,PhysRevA.61.010304, PhysRevLett.127.140502}, as will be shown later. However, determining the \textit{number of terms} $S$ in a convex combination is a non-trivial task. In the context of decomposition into arbitrary \textit{pure} states in a convex combination, the upper bound in terms is decided by Carathéodory's theorem \cite{caratheodory, horodecki,zhu2025unified}, which predicts an exponential growth in number of weighted terms required for a convex combination. As the authors of \cite{zhu2025unified} point out, it is rarely necessary to saturate Carathéodory's bound, and this often results in unneccessary computational overhead. One could also consider the \textit{Hermitian operator Schmidt Rank}, described in \cite{de2019separability}, to determine the number of tensor product terms of Hermitian operators. Unlike the Carathéodory bound, the Hermitian operator Schmidt rank does not straightforwardly apply to products of physical density matrices. Each state in the decomposition \eqref{eq:biseparable_decomposition}, referred to as $\rho_{i}^{(l)}$, can be sampled using the well-known Cholesky decomposition \cite{horn2012matrix}: 

\begin{equation}
    \rho_{i}^{(l)} = \frac{L_i^{(l)}\left(L_i^{(l)}\right)^\dagger}{\Tr \left[L_i^{(l)}\left(L_i^{(l)}\right)^\dagger\right]},
\end{equation}

where $L_i^{(l)}$ is a lower triangular matrix with complex entries (with the exception of real nonnegative diagonal elements) and dimensions $n_l\times n_l$, where $l$ is an index that corresponds to partition $a$ or $b$ ($l\in\{a,b\}$). 
Notably, every density matrix \(\rho\) has at least one Cholesky decomposition defined by \(L\) \cite{chol}.
In the original optimizer \cite{rosebush}, we represented pure states directly, and considered all possible partitions of an \(N\) qubit state, down to 1 and 2 qubit systems. For instance, to represent a 5 qubit state, we would consider the partition with two groups of 2 qubit systems and one separate qubit system.
If we set \(L\) arbitrarily, then we can construct all possible states \(\rho\) which account for k-separability and entanglement within \(\rho\). This is the key insight which makes it possible to construct a k-separable state with only bipartitions. A state or permutation of \(\rho^{(a)} \otimes \rho^{(b)}\) then accounts for all possible states or permutations of the tensor product between any mixed state in each subsystem, represented as \(\left( \sum_i p_i^{(a)} \bigotimes_j \rho^{(a)}_{ij}\right)\otimes \left( \sum_i p_i^{(b)} \bigotimes_j \rho^{(b)}_{ij} \right)\). A state like \(\rho^{(a)} \otimes \rho^{(b)}\) does not necessarily account for all states which can be reached by some convex combination of mixed states in each subsystem, which could instead be written as \(\sum_i p_i \rho^{(a)}_i \otimes \rho^{(b)}_i\). Each \(\rho^{(a)}_i\) and \(\rho^{(b)}_i\) is defined to be freely varying, so it can represent any mixed state for its subsystem; we encode these with trainable variables in Tensorflow, adjusted to take on any value corresponding to a valid density matrix via the Cholesky decomposition.

We now state the necessary condition on the number of terms $S$ needed for the blocks of $\rho$ to admit linear independence, and hence for the parameterization to be capable of spanning the corresponding bipartition.

\begin{theorem}[Minimum number of terms for a biseparable bipartition]
\label{thm:min_terms}
Let
\begin{equation}
\rho = \sum_{i=1}^S p_i\, \rho^{(a)}_i \otimes \rho^{(b)}_i,
\qquad \rho^{(a)}_i \in \mathbb{C}^{n_1\times n_1},\ \ \rho^{(b)}_i \in \mathbb{C}^{n_2\times n_2},
\end{equation}
and write $\rho$ in the block form of \eqref{eq:tensor_product}, with blocks $\bar\rho_{jk} = \sum_{i=1}^S p_i \rho^{(a)}_{i,jk}\rho^{(b)}_i$. If
\begin{equation}
S < \max\left\{\frac{n_1^2+n_1}{2},\ \frac{n_2^2+n_2}{2}\right\},
\end{equation}
then the lower-triangular blocks $\{\bar\rho_{jk}\}_{j\geq k}$ are necessarily linearly dependent, and the parameterization cannot span this bipartition. Consequently,
\begin{equation}
S \;\geq\; \max\left\{\frac{n_1^2+n_1}{2},\ \frac{n_2^2+n_2}{2}\right\}
\end{equation}
is a necessary condition for the blocks to be linearly independent.
\end{theorem}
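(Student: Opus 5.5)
The argument is dimension counting on the blocks, done once for each tensor factor and then combined by symmetry.

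\textbf{Step 1: the blocks live in an $S$-dimensional span.} Write $\rho$ in the block form of \eqref{eq:tensor_product}, indexing the outer ($a$) subsystem by $j,k=1,\dots,n_1$. Each block is
\begin{equation}
\bar\rho_{jk}=\sum_{i=1}^S p_i\,\rho^{(a)}_{i,jk}\,\rho^{(b)}_i .
\end{equation}
So every $\bar\rho_{jk}$ is a complex linear combination of the $S$ fixed matrices $\rho^{(b)}_1,\dots,\rho^{(b)}_S$. Hence all blocks lie in
\begin{equation}
V_b=\mathrm{span}_{\mathbb{C}}\{\rho^{(b)}_i\}_{i=1}^S, \qquad \dim V_b\le S.
\end{equation}

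\textbf{Step 2: count the lower-triangular blocks.} There are exactly $n_1(n_1+1)/2$ blocks with $j\ge k$; the upper blocks are determined by Hermiticity, $\bar\rho_{kj}=\bar\rho_{jk}^\dagger$. A family of more than $\dim V_b$ vectors in $V_b$ is linearly dependent. Therefore, if $S<(n_1^2+n_1)/2$, the family $\{\bar\rho_{jk}\}_{j\ge k}$ is necessarily linearly dependent.

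\textbf{Step 3: the other factor.} Apply the swap operator that exchanges the two tensor factors; this is one of the permutations already included in \code{swaps}. The decomposition becomes $\sum_i p_i\,\rho^{(b)}_i\otimes\rho^{(a)}_i$. Repeating Steps 1--2 with the roles of $a$ and $b$ exchanged (blocks indexed by the $b$ subsystem, entries lying in $\mathrm{span}\{\rho^{(a)}_i\}$) gives dependence whenever $S<(n_2^2+n_2)/2$.

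Taking the stronger of the two conditions gives dependence whenever $S<\max\{\cdot,\cdot\}$. The contrapositive is the stated necessary condition $S\ge\max\{(n_1^2+n_1)/2,\,(n_2^2+n_2)/2\}$.

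\textbf{Main obstacle.} The step that needs care is the link between linear dependence of the blocks and failure to span the bipartition. I would handle it by exhibiting a biseparable state whose lower-triangular blocks are linearly independent. Take $\rho=\sum_{j\ge k}q_{jk}\,\sigma_{jk}\otimes\tau_{jk}$, where:
\begin{itemize}
\item the $\sigma_{jk}$ are product-free density matrices on $a$ chosen so that, in the outer block structure, the $(j,k)$ block picks up a distinct $\tau_{jk}$;
\item the $\tau_{jk}$ are $n_1(n_1+1)/2$ linearly independent density matrices on $b$. These exist because density matrices span all $n_2^2$ Hermitian matrices, provided $n_2^2\ge n_1(n_1+1)/2$; in the other case one uses the swapped statement.
\end{itemize}
A generic separable state of this form has independent blocks, so it is a state that no parameterization with dependent blocks can reproduce. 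That gives the claimed failure to span.
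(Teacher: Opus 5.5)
Your Steps 1--3 are the paper's proof. The paper also first places every block in $\mathrm{span}_{\mathbb{C}}\{\rho^{(b)}_i\}$. Its Case 1 then obtains dependence by stacking the coefficients $p_i\rho^{(a)}_{i,jk}$ into $n_1(n_1+1)/2$ vectors $\nu_{jk}\in\mathbb{C}^S$, which is the coordinate form of your count of more than $\dim V_b$ vectors in $V_b$. Its Case 2 is your Step 3. As in the paper, Step 3 concerns the lower blocks of the swapped matrix $\sum_i p_i\rho^{(b)}_i\otimes\rho^{(a)}_i$, not the original $\bar\rho_{jk}$. For example, with $n_1=2$, $n_2=4$, $S=3$ the three original lower blocks can be independent although $S<10$. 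Read that way, your argument for the dependence claim and for the ``consequently'' clause is complete.

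The real gap is the bridge you yourself flag, from forced dependence to ``cannot span.'' Your construction needs a state whose lower blocks are independent in the arrangement that produces the threshold. That is possible only if the number of lower blocks is at most the squared inner dimension. The max threshold comes from the arrangement with the larger factor outside. With $m=\min(n_1,n_2)$ and $M=\max(n_1,n_2)$, this needs $m^2\ge M(M+1)/2$, which fails on every unbalanced cut $d^n|d^{N-n}$ ($n<N-n$) used in the paper.

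For $n_1=2$, $n_2=4$ (the $A|BC$ cut), the ten lower blocks of the swapped arrangement are $2\times 2$ matrices. They are therefore dependent for every $\rho$ and every $S$, including a Carath\'eodory-sized decomposition that certainly does span. Their dependence cannot signal failure to span, and your fallback to ``the swapped statement'' only yields $S\ge 3$, not $S\ge 10$. The paper does not close this gap either: it proves only the dependence and then asserts the spanning consequence in the remark after the proof.

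What your idea does establish rigorously comes from keeping all $n_1^2$ blocks, i.e.\ the operator Schmidt rank of $\rho$, which is at most $S$ for any $S$-term decomposition. Take $\frac{1}{n_1n_2}\bigl(\mathbb{I}+\epsilon\sum_{\mu=2}^{m^2}A_\mu\otimes B_\mu\bigr)$, with linearly independent traceless Hermitian $A_\mu$, $B_\mu$ and small $\epsilon$. This state lies in the separable ball around the maximally mixed state and has operator Schmidt rank $m^2$, so $S\ge m^2$ is necessary. That is sharper than the paper's bound on balanced cuts: 4 versus 3 for two qubits, since $\bar\rho_{kj}=\bar\rho_{jk}^\dagger$ need not lie in the complex span of the lower blocks. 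It gives nothing like the max on unbalanced cuts.

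Finally, no density matrix isolates a single off-diagonal entry, so make your $\sigma_{jk}$ concrete. For instance, take $\sigma_{jj}=|j\rangle\langle j|$ and $\sigma_{jk}=\frac12(|j\rangle+|k\rangle)(\langle j|+\langle k|)$ for $j>k$; the resulting coefficient matrix is block-triangular and invertible.
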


\begin{proof}
The state $\rho$ can be expressed in the block form
\begin{equation}
\rho = \sum_i^S p_i \rho_{i}^{(a)} \otimes \rho_{i}^{(b)} =
\begin{pmatrix}
\sum_i^S p_i \rho_{i,11}^{(a)}\rho_{i}^{(b)} & \sum_i^S p_i \rho_{i,12}^{(a)}\rho_{i}^{(b)} & \dots \\
\sum_i^S p_i \rho_{i,21}^{(a)}\rho_{i}^{(b)}& \sum_i^S p_i \rho_{i,22}^{(a)}\rho_{i}^{(b)} & \dots \\
\vdots & \vdots & \ddots\\
\end{pmatrix},
\label{eq:tensor_product}
\end{equation}
where $\rho_{i,jk}^{(l)}$, with $l\in\{a,b\}$, denotes the $i$th state in the decomposition, restricted to the $l$th part of the bipartition, evaluated at row $j$ and column $k$. This block form can be illustrated succinctly:
\begin{equation}
    \rho =
    \left(
    \begin{array}{cccc}
        \rblock{\bar{\rho}_{11}}     & \rblock{\bar{\rho}_{12}}     & \cdots & \rblock{\bar{\rho}_{1n_1}}   \\[3em]
        \rblock{\bar{\rho}_{21}}     & \rblock{\bar{\rho}_{22}}     & \cdots & \rblock{\bar{\rho}_{2n_1}}   \\[3em]
        \vdots                      & \vdots                      & \ddots & \vdots                       \\[1em]
        \rblock{\bar{\rho}_{n_11}}   & \rblock{\bar{\rho}_{n_12}}   & \cdots & \rblock{\bar{\rho}_{n_1n_1}}
    \end{array}
    \right),
    \qquad \bar{\rho}_{jk}\in\mathbb{C}^{n_2\times n_2},
\end{equation}
where each block satisfies
\begin{equation}
    \bar{\rho}_{jk} = \sum_{i=1}^{S} p_i \rho_{i,jk}^{(a)}\rho_i^{(b)}.
    \label{eq:block_decomposition}
\end{equation}
Hence each block $\bar{\rho}_{jk}$ is a complex linear combination of the matrices $\rho_i^{(b)}$, so
\begin{equation}
    \bar{\rho}_{jk} \in \operatorname{span}_{\mathbb{C}}\left\{ \rho_1^{(b)},\ldots,\rho_S^{(b)} \right\},
    \label{eq:block_span}
\end{equation}
and consequently
\begin{equation}
    \operatorname{span}_{\mathbb{C}}\left\{\bar{\rho}\right\}_{jk}^{d_A}
    \subseteq
    \operatorname{span}_{\mathbb{C}}\left\{ \rho_1^{(b)},\ldots,\rho_S^{(b)} \right\},
\end{equation}
from which
\begin{equation}
    \dim\left\{\operatorname{span}_{\mathbb{C}}\left\{\bar{\rho}\right\}_{jk}^{d_A}\right\}
    \leq
    \dim\left\{\operatorname{span}_{\mathbb{C}}\left\{ \rho_1^{(b)},\ldots,\rho_S^{(b)} \right\}\right\} \leq S.
\end{equation}
The left-hand dimension is maximized precisely when the blocks $\bar\rho_{jk}$ are linearly independent, so it remains to find the smallest $S$ for which this can occur.

\emph{Case 1: $n_1 \geq n_2$.} Since $\rho$ is Hermitian, $\bar\rho_{jk} = \bar\rho_{kj}^{\dagger}$, so we restrict attention to the lower-triangular blocks $j \geq k$. Suppose these were linearly dependent; then there exist complex coefficients $a_{jk}$, not all zero, with
\begin{equation}
    \sum_{j \geq k} a_{jk} \sum_i^S p_i \rho^{(a)}_{i,jk}\rho^{(b)}_i = 0.
    \label{eq: 7}
\end{equation}
Exchanging the order of summation,
\begin{equation}
    \sum_{i}^S \rho^{(b)}_i \sum_{j \geq k} a_{jk} p_i \rho^{(a)}_{i,jk} = 0.
    \label{eq: 8}
\end{equation}
If there exist coefficients $\{b_{jk}\}$, not all zero, such that
\begin{equation}
    \sum_{j \geq k} b_{jk} p_i \rho^{(a)}_{i,jk} = 0 \quad \text{for every } i,
    \label{eq: 9}
\end{equation}
then setting $a_{jk}=b_{jk}$ in \eqref{eq: 8} forces the sum to zero, giving \eqref{eq: 7}. Arranging, for fixed $(j,k)$, the values $p_i\rho^{(a)}_{i,jk}$ over $i=1,\ldots,S$ into a vector
\begin{equation}
\nu_{jk} =
\begin{pmatrix}
p_1 \rho^{(a)}_{1,jk}\\
p_2 \rho^{(a)}_{2,jk}\\
\vdots\\
p_S \rho^{(a)}_{S,jk}
\end{pmatrix},
\label{eq: 10}
\end{equation}
condition \eqref{eq: 9} holding is equivalent to the vectors $\{\nu_{jk}\}_{j\geq k}$ being linearly dependent. So the lower-triangular blocks $\bar\rho_{jk}$ are linearly independent only if the vectors $\nu_{jk}$ are linearly independent, i.e., only if the matrix $\nu$ with elements $\rho^{(a)}_{i,jk}$ has linearly independent columns. Since $\rho^{(a)}_i \in \mathbb{C}^{n_1\times n_1}$ is Hermitian, the number of columns of $\nu$ (one per lower-triangular entry) is $\frac{n_1^2+n_1}{2}$, and the rank of $\nu$ is bounded by $\min\left\{\frac{n_1^2+n_1}{2}, S\right\}$. Linear independence of the columns therefore requires
\begin{equation}
S \geq \frac{n_1^2+n_1}{2}.
\end{equation}

\emph{Case 2: $n_2 > n_1$.} Since scalar multiplication is commutative, $\sum_i^S p_i \rho^{(a)}_i \otimes \rho^{(b)}_i$ has exactly the same elements as $\sum_i^S p_i \rho^{(b)}_i \otimes \rho^{(a)}_i$, only rearranged into blocks of the other size. The argument of Case 1 therefore applies with the roles of $\rho^{(a)}_i$ and $\rho^{(b)}_i$ exchanged, giving
\begin{equation}
S \geq \frac{n_2^2+n_2}{2}.
\end{equation}

Combining both cases, the blocks $\bar\rho_{jk}$ can be linearly independent only if
\begin{equation}
S \geq \max\left\{\frac{n_1^2+n_1}{2},\ \frac{n_2^2+n_2}{2}\right\},
\end{equation}
which proves the claim.
\end{proof}

Since each $\rho^{(a)}_i$ and $\rho^{(b)}_i$ varies freely through the Cholesky decomposition, the bound of Theorem~\ref{thm:min_terms} is also the minimum number of mixed states per subsystem necessary to remove the forced linear dependence among the blocks — that is, the minimum $S$ necessary for the parameterization to span the set of biseparable states for a given bipartition.

A simple example is a 2 qubit mixed separable state which cannot be represented with \(S < 3\), defined below.

\begin{example}
The mixed separable state $\rho$ given by
\begin{equation}
    \label{eq: optimizer appendix example}
    \rho = \frac{1}{3} \begin{pmatrix} 1 & 0 \\ 0 & 0 \end{pmatrix} \otimes \begin{pmatrix} 1 & 0 \\ 0 & 0 \end{pmatrix} + \frac{1}{3} \begin{pmatrix} 0 & 0 \\ 0 & 1 \end{pmatrix} \otimes \begin{pmatrix} 0 & 0 \\ 0 & 1 \end{pmatrix} + \frac{1}{12}\begin{pmatrix} 1 & 1 \\ 1 & 1 \end{pmatrix} \otimes \begin{pmatrix} 1 & 1 \\ 1 & 1 \end{pmatrix}
    =
    \frac{1}{12}
    \begin{pmatrix}
        5 & 1 & 1 & 1\\
        1 & 1 & 1 & 1\\
        1 & 1 & 1 & 1\\
        1 & 1 & 1 & 5\\
    \end{pmatrix}
\end{equation}
cannot be represented by any convex combination of 2-qubit separable states of the form
\begin{equation}
    \label{eq: optimizer appendix convex combination redef}
    \rho_S = \sum_{i=0}^{S-1} p_i \rho^{(a)}_i \otimes \rho^{(b)}_i
\end{equation}
for $S < 3$.
\end{example}





\section{\label{sec: 2 qubit inferred measurement example} Calculating Inferred Measurements for 2 Qubit Photonic Witness Measurement}
We then computed the expectation value as follows. A single measurement setting \(H_i\) is composed of two single qubit operators \(H_{i0}\) and \(H_{i1}\), with \(H_i = H_{i0} \otimes H_{i1}\). Each local operator has eigenvalues \(\lambda_{ij0}\) and \(\lambda_{ij1}\), and by linearity of the tensor product, the overall measurement \(H_i\) has global eigenvalues \(\lambda_{ik} = \lambda_{ij0} \times \lambda_{ij1}\) where \(k \in [0,3]\). To match, each eigenvalue \(\lambda_{ijk}\) has corresponding eigenstates \(|\Phi_{ijk}\rangle\) such that the overall measurement \(H_{i}\) has four eigenstates: \(|\Phi_{i00}\rangle \otimes |\Phi_{i10}\rangle\), \(|\Phi_{i00}\rangle \otimes |\Phi_{i11}\rangle\), \(|\Phi_{i01}\rangle \otimes |\Phi_{i10}\rangle\), and \(|\Phi_{i01}\rangle \otimes |\Phi_{i11}\rangle\). We measured the probability \(p_i\) of our photon pairs being observed in each eigenstate as \(p_{ik}\) where for example \(p_{03}\) is the probability of measuring a photon for measurement 0 in the eigenstate  \(|\Phi_{001}\rangle \otimes |\Phi_{011}\rangle\). We computed the probabilities of inferred single qubit measurements with \(\{p_{ik}\}\) as \(\boldsymbol{p}_{l} = p_{ik_1} + p_{ik_2}\) with \(k_1 != k_2\) where we use index \(l\) to match the witness coefficient index and the sum of probabilities reflects the probability of a single qubit being found in one of the eigenstates for just that operator on that qubit. For example, we could define the probability of one photon being observed in eigenstate \(|\Phi_{i00}\rangle\) for measurement \(i\) as the sum of the probabilities of both photons being observed as \(|\Phi_{i00}\rangle \otimes |\Phi_{i10}\rangle\) or as \(|\Phi_{i00}\rangle \otimes |\Phi_{i11}\rangle\).

\section{\label{sec: optical elements appendix} Projective Measurements Using the Polarization Analyzer Elements}
In Section \ref{subsec: Photonic Verification} we describe representing quantum states with physical single photon polarization, and measuring those states using entanglement witnesses which we break down into a set of projective measurements. We carry out those projective measurements by rotating particular eigenstates related to the witness onto the horizontally polarized state. To change the polarization of these photons we use optical elements called wave-plates.

Wave-plates take advantage of birefringence, a physical property of certain materials in which the index of refraction is different for electric field oscillations along different axes. Polarized light will experience different optical path lengths along these axes, introducing phase differences. Two basic wave-plates are a half wave-plate and quarter wave-plate, which introduce phase differences between perpendicular axes of \(\pi\) and \(\pi/2\), respectively 
\cite{pedrotti}. 

It can be shown that a half-wave plate, rotated at an angle \(\theta_0\) to the horizontal, has the effect of a unitary operator written as \cite{jones_matrices}

\begin{equation}
    \label{eq: HWP}
    U_{HWP} = 
\begin{pmatrix}
\cos^{2}\theta_0 - \sin^{2}\theta_0 & 2\cos\theta_0\sin\theta_0 \\
2\cos\theta_0\sin\theta_0& \sin^{2}\theta_0 - \cos^{2}\theta_0
\end{pmatrix}
\end{equation}.

A quarter wave-plate rotated at an angle \(\theta_1\) to the horizontal produces a unitary operation which can be written as \cite{jones_matrices}

\begin{equation}
    \label{eq: QWP}
    U_{QWP} = 
\begin{pmatrix}
\cos^{2}\theta_1+ i\sin^{2}\theta_1 & (1-i)\sin\theta_1\cos\theta_1 \\
(1-i)\sin\theta_1\cos\theta_1 & \sin^{2}\theta_1 + i\cos^{2}\theta_1
\end{pmatrix}
\end{equation}

We also use polarizers, set to admit only horizontally polarized light. Polarizers achieve this by using anisotropic materials, like long-chain polymers or crystals—whose internal structure aligns electrons in a specific direction \cite{pedrotti}. Horizontal polarizers act as non-unitary operators, with a Jones matrix given by 

\begin{equation}
    \label{eq: POL}
    U_{POL} = 
\begin{pmatrix}
1 & 0 \\
0 & 0
\end{pmatrix}
\end{equation}

Our polarization analyzer uses a quarter wave-plate followed by a half wave-plate followed by a polarizer. For he eigenstate for measurement \(i\), qudit \(j\) and outcome \(k\), which we denote \(|\Phi_{ijk}\rangle\), We use gradient descent to find the parameters \(\theta_0\) and \(\theta_1\) such that

\begin{equation}
    \label{eq: QWP HWP Pol expression}
    U|\Phi_{ijk}\rangle \equiv U_{HWP}U_{QWP}|\Phi_{ijk}\rangle = |H\rangle
\end{equation}.

Projecting an unknown state \(|\Psi\rangle\) onto the eigenstate \(|\Phi_{ijk}\rangle\) is equivalent to rotating \(|\Psi\rangle\) by  \(U_{HWP}U_{QWP}\) and then projecting it onto the horizontally polarized \(|H\rangle\) state, as shown below:

\begin{equation}
    \label{eq: projection equivalence}
    \langle \Psi |\Phi_{ijk}\rangle = \langle \Psi|U^\dagger U|\Phi_{ijk}\rangle = \langle \Psi|U^\dagger |H\rangle
\end{equation}.

Then, when the polarizer \(U_{POL}\) is applied, only photons in the state \(|H\rangle\) is admitted to the single photon detectors, corresponding to photons originally in the eigenstate \(|\Phi_{ijk}\rangle\) we wish to detect. With this two qubit system, and two detectors, we can then detect each eigenstate of global measurement \(i\) given by \(|\Phi_{ik}\rangle = |\Phi_{i0k}\rangle \otimes |\Phi_{i1k}\rangle\).

Only a quarter wave-plate and a half-wave plate are necessary to rotate an arbitrarily polarized state onto any linearly polarized state, including \(|H\rangle\) \cite{fano}.

\section{\label{sec: Coincidence histograms} Calculating \(p_{ik}\) and \(p\) from Coincidence Histograms}

We construct coincidence histograms by performing repeated projective measurements on a bi-partite Werner state, defined in Eq. (\ref{eq: noise}), using the method described in Section \ref{subsec: Photonic Verification}. Each histogram records the accumulated counts sorted by relative time of arrival between photon detection events. Examples of these histograms are shown in Fig. \ref{fig: cumulative coincidences} a) and b). The histogram has a binsize of 328 picoseconds. 

We find the probabilities \(p_{ik}\) by evaluating the histograms of coincidences for each projective measurement. We choose a certain number of bins \(n\) around the peak of the histogram, called the coincidence window (the gray region in the figure), to count the coincidences. For one measurement, we compute \(p_{ik} = A_{ik}/(\sum_q A_{iq})\), where \(A_{ik}\) is number of coincidences within the window.

We sum the counts from individual histograms to produce the overall coincidence histograms shown in Fig. \ref{fig: cumulative coincidences} c) and d). We use these overall histogram to compute the p values.  The average noise level \(B\) is indicated by a dashed red line in the Figure, and we sum all individual coincidence totals \(A_{ik}\) for a given coincidence window, as indicated by the gray band. For total coincidence counts \(\sum_{ik} A_{ik}\) and total noise counts within the coincidence window \(nB\), we compute \(p\) in Eq. (\ref{eq: noise}) as \(nB/\sum_{ik} A_{ik}\). 

We can vary the level of noise in the Werner state ``seen" by our witness in two ways. One is to physically inject more noise into the channel, as described by Section \ref{subsec: Photonic Verification}. The other is to include more noise by increasing the number of bins \(n\) of the coincidence window. We implemented both in our witness expectation value calculation which we present in Fig. \ref{fig: p value vs expectation value}.



\begin{figure*}[htbp] 
    \centering
    \subfloat[Coincidence histogram for a particular projective measurement with noise flux of \(70 \times 10^4\) photons per second. The gray region shows the noise bins. \label{fig: low noise single coinc histo}]{%
        \includegraphics[width=0.48\textwidth]{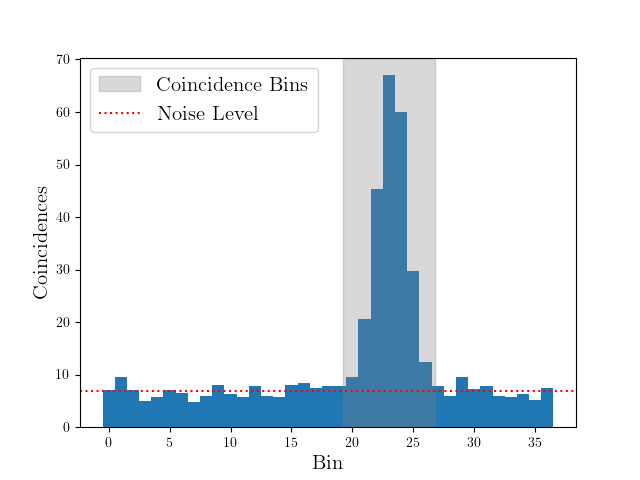}%
    }
    \subfloat[Coincidence histogram for a particular projective measurement with noise flux of \(16 \times 10^5\) photons per second. The gray region shows the noise bins. \label{fig: high noise since coinc histo}]{%
        \includegraphics[width=0.48\textwidth]{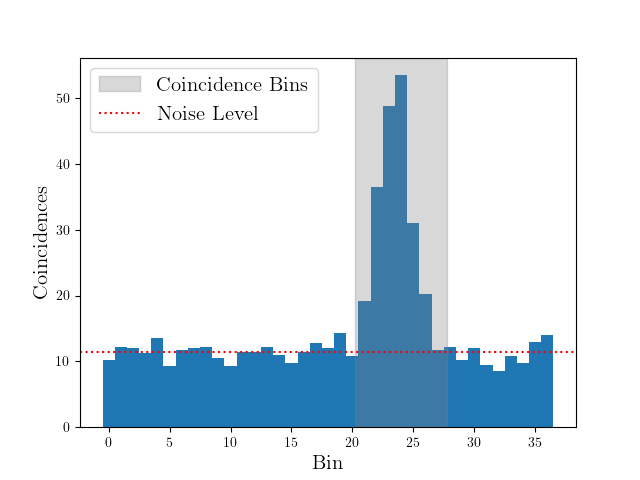}%
    }\hfill
    \subfloat[Cumulative distribution of coincidences for all measurements with a noise flux of \(70 \times 10^4\) photons per second. \label{fig: low noise cumc}]{%
        \includegraphics[width=0.48\textwidth]{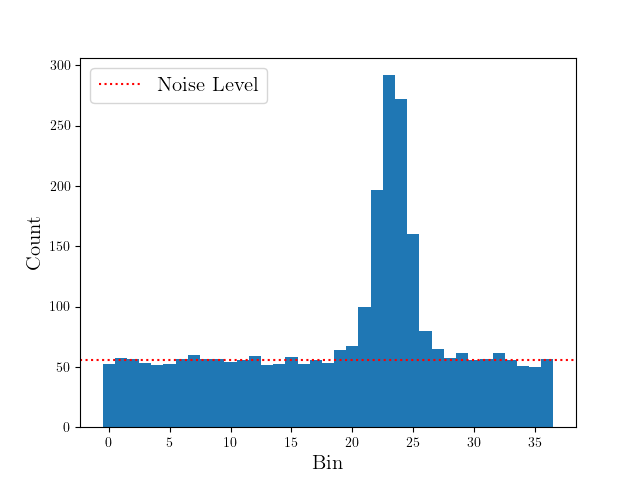}%
    }
    \subfloat[Cumulative distribution of coincidences for all measurements with a noise flux of \(16 \times 10^5\) photons per second. \label{fig: high noise cumc}]{%
        \includegraphics[width=0.48\textwidth]{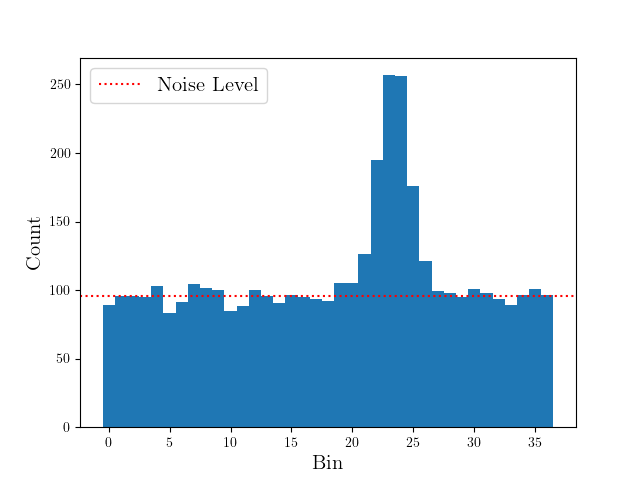}%
    }\hfill
    
    \vspace{1em} 

    \caption{Cumulative histograms for all measurements, and a histogram of coincidences for a single projective measurement. For each projective measurement onto an eigenstate, we used an 80-second integration time. Our coincidence histograms have a temporal resolution of 328 picoseconds, which we set as the bin size.}
    \label{fig: cumulative coincidences}
\end{figure*}

\end{document}